\newtheorem{claim}[theorem]{Claim}
\newtheorem{notation}{Notation}
\newtheorem{remark}[theorem]{Remark}
\newcommand{\lab}{\left | }
\newcommand{\rab}{\right | }
\newcommand{\lcub}{\left \{ }
\newcommand{\rcub}{\right \} }
\newcommand{\te}{\rightarrow}
\newcommand{\pa}{\partial}
\title{On the Steady States of Weakly Reversible Chemical Reaction Networks}
\author{Jian Deng\thanks{CEMA, Central University of Finance and Economics, Beijing, P.R.China,
100085({\tt jdeng@fudan.edu.cn}).}\and Martin Feinberg \thanks{Department of Mathematics, Ohio State University, Columbus, Ohio}
        \and Chris Jones\thanks{Department of Mathematics,
University of North Carolina at Chapel Hill, Chapel Hill, NC 27599} \and Adrian Nachman\thanks{Department of Mathematics, University of
Toronto, Canada M5S 3G3}}
\begin{document}

\maketitle

\begin{abstract}
A natural condition on the structure of the underlying chemical reaction network, namely weak reversibility, is shown to guarantee the existence of an equilibrium (steady state) in each positive stoichiometric compatibility class for the associated mass-action system. Furthermore, an index formula is given for the set of equilibria in a given stoichiometric compatibility class.
\end{abstract}

\begin{keywords}
chemical reaction network, weak
reversibility, equilibria.
\end{keywords}

\begin{AMS}
34C10, 37N99
\end{AMS}

\pagestyle{myheadings}
\thispagestyle{plain}
\markboth{J. Deng et al.}{Steady state of chemical reaction networks}

\section{Introduction}

 \hspace{0.5cm} The goal of chemical reaction network theory is to
formulate conditions under which the dynamical fate of composition
trajectories can be ascertained, even in complex reactions.
Chemical reaction systems considered in practical applications are
often very complicated. In principle, the number of species
involved can be arbitrarily high. In practice, a full system can
involve tens or hundreds of species. Although models with a
handful of species are usually used, these systems are still
``high'' dimensional from the perspective of dynamical system.
Moreover, the reactions between the complexes is often only known
approximately, since the determination of rate constants is quite
difficult, if not impossible. A natural approach is therefore to
attempt to give a qualitative description of the behavior of the
reaction system.

The modeling of chemical reactions can be achieved via the
mass-action assumption. A typical chemical reaction network (CRN) obeying mass-action kinetics
consists of three components: $\lcub \mathcal{S,Y}, K=(k_{ij})  \rcub$, where

\begin{enumerate}
\item $\mathcal{S} = \lcub S_1, S_2, ..., S_n \rcub $ is the set
of species involved in the chemical reaction.

\item $\mathcal{Y} = \lcub Y_1, Y_2, ..., Y_m \rcub $ is the set
of complexes in the given CRN, with each $Y_i \in R^n,
i=1,2,\ldots,m$.

\item If we have a reaction $Y_i \te Y_j$, then $k_{ij} = k_{Y_i
\te Y_j} > 0$ is the \emph{rate constant} for the reaction $Y_i
\te Y_j$, assuming mass-action kinetics. If there is no reaction
from $Y_i$ to $Y_j$, then we define $k_{ij} = k_{Y_i \te Y_j} =
0$.

\end{enumerate}

In this framework, the equation governing the evolution of
chemical concentrations is given by:

\begin{equation}\label{dyn}
\dot x = f(x) = \sum_{i,j = 1}^m
k_{ij}x^{Y_i}(Y_j - Y_i)
\end{equation}

\noindent where $x \in R^n_+ = \lcub x = ( x_1, \ldots, x_n ) :
x_i > 0, i = 1,2,\ldots,n \rcub$, and each $x_i$ is the
concentration of species $S_i$ in the chemical reaction network. \bigskip

The system we obtain is an array of ordinary
differential equations with polynomial vector fields determining
the behavior of the concentration of each species in the reactor.
This ODE could be quite complicated: the dimension of the equation
could be dauntingly high, while the number of parameters (rate
constants, for example) could be equally large. It is therefore
quite surprising, but nevertheless true, that many system exhibit
relatively simple dynamics. In fact, based on considerable
experience, chemical experimentalists have developed an
``intuition'' that a ``normal'' chemical experiment will quickly
lead to some steady state.

So one of the primary concerns of chemical reaction network
theory, up to now, has been to determine the capacity of a given
reaction network system for multiple steady states. This theme
emerged from the early work of chemical reaction network theory
(\cite{FH1}, \cite{HJ}), where chemical reaction networks were
classified according to their deficiency. Horn, Jackson and
Feinberg give a quite complete picture of the behavior of the
deficiency-zero and deficiency-one mass-action systems (\cite{F3},
\cite{F4}). Introductory material for chemical reaction network
theory is given in (\cite{F1}, \cite{F2}), and we will follow the
notation and definitions in these papers as much as possible.

The capacity of a reaction network for multiple states depends
heavily on the underlying algebraic structure of the reaction
system, given by the ``react to'' relation between the complexes.
Two complexes are linked together if there is a reaction between
them, and we say two complexes are in the same \emph{linkage
class} if we can find a finite sequence of reactions between them. It is
not hard to check that the linear subspace spanned by all the
reaction vectors $Y_j - Y_i$, for each reaction $Y_i \te Y_j$, is
invariant under the flow induced by (\ref{dyn}). It is called the
\emph{stoichiometric subspace}. Each linear manifold parallel to
it is called a \emph{stoichiometric compatibility
class}(\emph{SCC}), which is also invariant under the flow.

The ``react to'' relation between the complexes also gives us
various notion of reversibility, as alluded to above. If each
reaction $A \te B$ in the chemical reaction network is accompanied by the reverse reaction $B \te
A$, then this reaction network is called \emph{reversible}. If for
each reaction $A \te B$ we have a  reaction chain  from $B$ to
$A$, i.e., we can find a finite sequence of reactions $B \te C_1, C_1 \te
C_2, \ldots, C_{n-1} \te C_n, C_n \te A$, then this reaction
system is called \emph{weakly reversible}. We would consider
solely weakly reversible chemical reaction networks in this paper.

In the early eighties, Nachman obtained the existence of steady
states in each stoichiometric compatibility class if there is only
one linkage class in the chemical reaction network (\cite{N}).
Using this result, Feinberg showed the existence of steady states
in each SCC (personal communications) if the chemical reaction
network is weakly reversible and the deficiency is equal to the
summation of the deficiencies of each linkage class. So the next natural
question is about the existence of steady states for general
weakly reversible systems.



\bigskip

We are interested in the following two problems:

\begin{enumerate}

\item Does there exist a positive steady state for (\ref{dyn})?

\item Does there exist a steady state in each nonempty, positive
stoichiometric compatibility class for (\ref{dyn})?

\end{enumerate}

The main result in this paper gives a affirmative answer to the two
questions above, under a natural condition.

\begin{theorem} For each weakly reversible chemical reaction network obeying mass-action kinetics, the flow of (\ref{dyn}) has finitely many (at least one) steady states in each nonempty, positive stoichiometric compatibility class.
\end{theorem}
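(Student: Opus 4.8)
The plan is to recast the problem as finding zeros of a tangent vector field on a convex polytope and to attack it with degree theory. Fix a positive stoichiometric compatibility class, write $S$ for the stoichiometric subspace, $s=\dim S$, and $P=(c+S)\cap R^n_+$ for the class, with closure $\bar P=(c+S)\cap\{x:x_i\ge 0\}$. Since every reaction vector $Y_j-Y_i$ lies in $S$, the field $f$ of (\ref{dyn}) maps $R^n_+$ into $S$; restricted to $P$ it is a tangent vector field on the $s$-dimensional affine manifold $c+S$, and a positive steady state in the class is exactly an interior zero of $F:=f|_{\bar P}$. When the class is bounded (for instance when the network is conservative) $\bar P$ is a compact convex polytope, hence contractible, and the strategy is: show $F$ points inward along $\partial\bar P$ with no zeros there, so that the map $x\mapsto x+\delta F(x)$ carries $\bar P$ into itself for small $\delta>0$ and Brouwer's fixed point theorem produces an interior zero.

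The engine is a boundary lemma. On a face of $\bar P$ where the species in an index set $I$ vanish, one computes for $i\in I$ that $\dot x_i=\sum_{a,b}k_{ab}x^{Y_a}(Y_b-Y_a)_i\ge 0$: the only monomials $x^{Y_a}$ that do not vanish are those whose complex $Y_a$ is supported off $I$, and for such $a$ one has $(Y_a)_i=0$, whence $(Y_b-Y_a)_i=(Y_b)_i\ge 0$. Thus $\bar P$ and the positive orthant are forward invariant and $F$ is weakly inward on $\partial\bar P$. This is where weak reversibility must be used decisively: the weak inequality has to be upgraded to genuine inward pointing with no boundary equilibria, and it is the existence of a return cycle out of every complex that should guarantee, on each boundary face, a species with strictly positive production. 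Concretely I would use weak reversibility to build a strictly inward perturbation $F_\e=F+\e\,v$, with $v$ supported near $\partial\bar P$ and pointing into $P$, having no zeros on $\partial\bar P$ and the same interior zeros as $F$ for small $\e$; then the Brouwer argument above applies to $F_\e$.

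The main obstacle is exactly this boundary step together with the possibility of an \emph{unbounded} class. Weakly reversible networks can possess boundary equilibria, so $F$ may genuinely vanish on $\partial\bar P$; showing that such zeros do not absorb the whole degree is the crux, and is the precise point at which weak reversibility, rather than bare forward invariance, is indispensable. For an unbounded class I would first establish an a priori bound: exhibit a proper functional on $c+S$ (linear, or of entropy type) whose derivative along $f$ is negative outside a large compact set, confining all steady states to a compact region so that the degree can be computed on a large truncated polytope whose outer faces are inward-pointing. I expect this coercivity-at-infinity estimate, once more powered by the returning cycles of weak reversibility, to be the second serious hurdle.

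Finally, finiteness. The steady states in the class form a compact semialgebraic set, cut out by the polynomial equations $f(x)=0$ together with the affine constraints defining $c+S$. I would show each equilibrium is isolated, so that a local index is well defined, and deduce finiteness from compactness plus isolation. Isolation is also what lets the promised index formula make sense: Poincar\'e--Hopf applied to the strictly inward field $F_\e$ on the ball-like polytope forces the indices of the interior zeros to sum to a fixed nonzero number ($\pm1$, according to the parity of $s$); once every equilibrium carries a well-defined index and the total is this topological invariant, a positive-dimensional continuum of equilibria is excluded and only finitely many can remain.
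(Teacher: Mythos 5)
There is a genuine gap, and it sits exactly where you placed your two ``hurdles'': you have identified the obstacles correctly but supplied no mechanism to overcome them, and your proposed mechanism for the first one is in fact false. Weak reversibility does \emph{not} guarantee ``on each boundary face, a species with strictly positive production.'' Take the reversible (hence weakly reversible) network $A+B \te 2B$, $2B \te A+B$: every reaction rate carries a factor $x_B$, so $f$ vanishes identically on the face $x_B=0$, and in the compatibility class $x_A+x_B=c$ the vertex $(c,0)$ is a genuine boundary equilibrium at which no species is produced at a positive rate. Consequently no perturbation $F_\e=F+\e v$ supported near $\partial \bar P$ can be strictly inward while provably preserving the interior zero count, and the degree of $F$ over the interior is not pinned down by your argument. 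Your coercivity functional for unbounded classes is likewise hypothesized rather than exhibited. Finally, your finiteness step is circular: Poincar\'e--Hopf requires isolated zeros before indices are defined, so you cannot use the index formula to exclude continua of equilibria.

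The paper closes these gaps by never fighting the boundary in $x$-space at all. The logarithmic change of variables $z=\ln x\cdot C^t$ maps $R^n_+$ onto all of the subspace $K$, so the orthant boundary is pushed to infinity, and the steady-state problem becomes the intersection problem $G(K)\cap K^{\bot}\ne\emptyset$ for the relation function $G(z)=e^z\cdot{\mathbf R}$. The role of your missing boundary/coercivity estimate is played by a single coercivity-at-infinity inequality, Lemma 3.3: $(G(z),z)\le -c(\lab z\rab)e^{\lab z\rab}$ on $P_m$, with $c(r)\te+\infty$; its proof, via the telescoping ``reaction chain'' estimate of Lemma 3.2, is precisely where weak reversibility enters --- through chains $Y_{i_1}\te\cdots\te Y_{i_s}$ in the relation matrix, not through faces of the polytope. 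The degenerate directions $D_l$ (one per linkage class, on which $(G(z),z)\equiv 0$) are handled by writing subspaces as graphs $x\mapsto x+F(x)$ over $K_2\subset P_{m,l}$ and replacing the ball by the truncated convex body $\Omega_H(r)$ on whose boundary the field points strictly inward (Lemma 4.5); the stoichiometric compatibility class constraint is converted into this same intersection problem with a \emph{nonlinear} $F$ by Feinberg's diffeomorphism $\Psi(x)=\Pi_{K_2}(\ln x\cdot C^t)$ (Lemma 5.1), covered by Corollary 4.10. Finiteness then follows from analyticity of $\Pi_H\circ G^*$ together with the fact that all zeros lie in the compact set $\Omega_H(r)$ (Remark 4.8), and the index formula $(-1)^s$ from degree theory on $\partial\Omega_H(r)$ --- in $z$-coordinates, where the boundary-equilibrium obstruction you ran into simply does not exist. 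In short: your setup is a reasonable first instinct, but the decisive idea you are missing is the exponential reformulation that trades boundary behavior and unboundedness of the class for one uniform estimate at infinity.
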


 Our result shows that the experimental chemists' ``intuition'' is well grounded, at least
for weakly reversible and mass-action chemical reaction networks.
It shows that for such systems there will always be some steady
state in each stoichiometric compatibility class(SCC), and
although the numbers of steady state may differ for different SCC,
the summation of the indices for the steady states remains the
same (see Corollary $5.2$). Moreover, this does not depend on the specific
values of the positive rate constants, as long as the system
remains weakly reversible. This would be quite appealing to experimental
chemists since the rate constants are difficult to measure.

\noindent \begin{remark} \rm The component of the complexes $Y_i$
is called the stoichiometric coefficients, and for typical chemical
reactions, it should be a nonnegative integer. But our result
holds when the stoichiometric coefficients are real, thus for
example we allow the following reaction:

  $$ 0 \te 2A + B, \hspace{.5in} 2.5B + C \te 0, \hspace{.5in} -2C + D = E $$

\noindent as long as the chemical reaction is weakly reversible.
Therefore the case where there is ``source'' and ``sink'' for the
chemical reaction system can still be described by our result.

\end{remark}

\bigskip



\noindent \emph{Plan of the paper.}  In section 2, we define two matrices $C, \mathbf{R}$ that
determine the reaction network structure, discuss their properties
and reformulate the existence problem as that of an intersection
problem of two hypersurfaces. In section 3, we restrict ourselves
to the case of one linkage class. A vector-valued function $G(z)$ is defined, and a priori estimate of $(G(z),z)$ is
given and used to solve the intersection problem in a special
case. In section 4, we discuss the case of $l$ linkage classes and
construct a bounded, convex set so that the intersection problem
is reduced to proving that a corresponding vector field is pointing
in at each point of its boundary. Thus we give the proof of the
existence of steady states in the whole $R^n_+$ space. In section
5, we utilize the remark given at the end of section 4 to prove
the existence of a steady state in each stoichiometric
compatibility class, and then we use a homotopy argument to give
the proof of the index formula.


\section{Reaction network structure and reformulation of the equation}
\hspace{0.5cm} It has long been known that the existence of steady
states depends heavily on the algebraic structure of the
chemical reaction network. For example, when the reaction network is reversible, forest-like
and the deficiency of the network is 0, one can show the existence of steady states in $R^n_+$ (the reader is referred to (\cite{J}) concerning the precise exposition). But it is not obvious
how to extend the result to the weakly reversible reaction networks of arbitrary
deficiency.

The information about a weakly reversible reaction network can essentially
be decomposed into two parts: one about the configuration of the
complexes in the $R^n$ space, offered by the set $\mathcal{Y} =
\lcub Y_1, Y_2, \ldots, Y_m \rcub$; the other one about the
reaction information between the complexes, given by the
rate constants $k_{ij}, i, j = 1, 2, \ldots, m$. We
give the following

\begin{definition}  For each weakly reversible chemical reaction network $\lcub \mathcal{S,Y}, K  \rcub$, the \emph{configuration matrix}  $C$ is given by
$$ C = \left( \begin{array}{c}
Y_1 \\ Y_2 \\ \vdots \\ Y_m \end{array} \right)_{m \times n} = \left( \begin{array}{cccc} y_{11} & y_{12} & \ldots & y_{1n} \\
   y_{21} & y_{22} & \ldots & y_{2n} \\
   \vdots & \vdots & \vdots & \vdots \\
   y_{m1} & y_{m2} & \ldots & y_{mn}
   \end{array} \right), $$
\noindent while the \emph{relation matrix} ${\mathbf R}$ is given by \\

\hspace{2.5cm} ${\mathbf R} = (r_{ij})_{m \times m}$, with $r_{ij} = \left \{ \begin{array}{ll} k_{ij}, & if \quad i \ne j. \\
                                                      -\sum_{s \ne i} k_{is}, & if \quad i = j.
                                    \end{array} \right. $
\end{definition}

Therefore the reaction network structure is completely determined
by the two matrices $\lcub C, \mathbf{R} \rcub$. Notice that
\emph{the property of weak reversibility is uniquely determined by
the relation matrix ${\mathbf R}$}. They will both play a role in
the existence of the positive equilibrium, but it will turn
out that the specific structure of $\mathbf{R}$ will be crucial
in the following development.

System (\ref{dyn}) can be rewritten in terms of $\lcub C, \mathbf{R}
\rcub$:

\begin{equation}\label{refo}
\left \{
\begin{array}{ll}
\dot x & = \left( \begin{array}{cccc} x^{Y_1} & x^{Y_2} & \ldots &
x^{Y_m} \end{array} \right)_{1 \times m} {\mathbf R}_{m \times m}
\left( \begin{array}{c}
Y_1 \\ Y_2 \\ \vdots \\ Y_m \end{array} \right)_{m \times n} \\
& = \left( \begin{array}{cccc} e^{\ln x \cdot Y_1} & e^{\ln x
\cdot Y_2} & \ldots & e^{\ln x \cdot Y_m} \end{array} \right)_{1
\times m} \cdot {\mathbf R} \cdot C
\end{array}\right .
\end{equation}

\noindent where $\lcub C, \mathbf{R} \rcub$ are given as above. \bigskip

We list some basic properties of the relation matrix
$\mathbf{R}$.

\begin{enumerate}

\item ${\mathbf R} = (r_{ij})_{m \times m}$ satisfies:

\begin{itemize}

\item $ \sum_{j =1}^m r_{ij} = 0$, for $\hspace{0.2cm}$$i = 1, 2,
\ldots, m$

\item $\left \{
       \begin{array}{ll}
       r_{ij} \ge 0, & for \quad i \ne j \\
       r_{ii} < 0, & for \quad i = 1, 2, \ldots, m
       \end{array} \right.$

\end{itemize}

\item Every two complexes in the same linkage class are linked by
``reaction chains''. Mathematically, this means for each pair
$Y_i, Y_j$ in a certain linkage class, there exist $Y_{i_1},
Y_{i_2}, \ldots, Y_{i_s}$ such that $Y_{i_1} = Y_i, Y_{i_s} = Y_j$
and $r_{i_{k-1}i_k}> 0 $ for $k = 2, 3, \ldots, s$.

\item If the CRN has $l$ linkage classes, $l > 1$, then by relabeling $Y_1,
Y_2, \ldots, Y_m$ we can write ${\mathbf R}$ as the following
diagonal block form

\begin{equation}\label{block}
{\mathbf R} = \left( \begin{array}{cccc}
                  {\mathbf R}_1    \\
                   & {\mathbf R}_2  \\
                   & &  \ddots &     \\
                   & & & {\mathbf R}_l
                 \end{array} \right)
\end{equation}

\noindent with each ${\mathbf R}_i$ being the relation matrix for
a linkage class.

\end{enumerate}

The only thing in item $1$ which needs explanation is $r_{ii} <
0$. If $r_{ii} = 0$ for some $i$, then $r_{ij} = 0$ for $j \ne i$.
Thus the complex $Y_i$ does not react to any other complexes. By
weak reversibility of the chemical reaction network, no other
complex will react to $Y_i$ either. Thus $Y_i$ can safely be
discarded and we only need to consider the subnetwork without $Y_i$.
Therefore we will always assume that each complex $Y_i$ appears at
least once in the chemical reaction network.

\begin{remark} \rm It is interesting to observe that the classical Perron-Frobenius theory for matrices with positive entries can be used to get information about the eigenvalue distribution of the relation matrix ${\mathbf R}$. For example, we can prove that ${\mathbf R}$ has only nonpositive eigenvalues, and the algebraic multiplicity of the zero eigenvalue of ${\mathbf R}$ is exactly the number of linkage classes within the CRN. Also, the eigenvector corresponding to the zero eigenvalue is nonnegative. This spectral information seems to have decisive influence on the stability property of the phase portrait of the flow. The interested reader is referred to (\cite{G}).
\end{remark}

\begin{notation} For each CRN we define the \emph{norm}
of its relation matrix $\mathbf{R}$ to be $\lab \mathbf{R} \rab =
\max_{i =1,2,\ldots,m}\lab r_{ii} \rab$. Also, we define
$\tau(\mathbf{R}) = \min_{i \ne j, i,j \in \lcub 1,2,\ldots,m
\rcub}\lcub r_{ij} : r_{ij} > 0 \rcub$. \end{notation}

\bigskip

Our next task in this section is to transfer the existence problem
to one of intersection of two hypersurfaces. The idea is to see
that $x \in R_+^n $ is an equilibrium point for system (\ref{refo}) if and
only if

$$ \left( \begin{array}{cccc} e^{\ln x \cdot Y_1} & e^{\ln x \cdot Y_2} & \ldots & e^{\ln x \cdot Y_m} \end{array} \right)_{1 \times m} \cdot {\mathbf R} \in \lcub z \in R^m : z \cdot C = 0 \rcub ,$$

\noindent So it is natural to consider the intersection

\begin{equation}\label{inter}
 \lcub \left( \begin{array}{cccc} e^{\ln x \cdot Y_1} & e^{\ln x \cdot Y_2} & \ldots & e^{\ln x \cdot Y_m} \end{array} \right) \cdot {\mathbf R}: x \in R_+^n \rcub \cap \lcub z \in R^m : z \cdot C = 0 \rcub
\end{equation}

\noindent if we want to find positive steady states for (\ref{refo}).

\begin{notation} $ N = \lcub z \in R^m : z \cdot C = 0
\rcub $, $K = \lcub z \in R^m :z = x \cdot C^t \hspace{0.1cm} for
\hspace{0.1cm} some \hspace{0.1cm} x \in R^n \rcub $.\end{notation}

Note that $K = N^{\bot},  K^{\bot} = N$ where $N^{\bot}$ and $K^{\bot}$ mean the orthogonal
subspaces of $N$ and $K$ in $R^m$ respectively.

\begin{definition} \rm We define the \emph{relation function} $G(z): R^m \te R^m$ as

\begin{equation}
 G(z) \stackrel{def}= e^z \cdot {\mathbf R} = (
e^{z_1} e^{z_2} \ldots z^{z_m}) {\mathbf R}.
\end{equation}
\end{definition}

Then we have the following

\begin{proposition} There exists a positive equilibrium point for (\ref{refo}) in $R^n_+$ if and only if

\begin{equation}\label{intersec}
 G(K) \cap K^{\bot} \ne \emptyset.
\end{equation}

\end{proposition}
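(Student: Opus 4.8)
**The plan is to establish the biconditional by unwinding the definitions and exploiting the substitution $z = \ln x \cdot C^t$.**

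Let me understand what we're proving. We have a positive equilibrium $x \in R^n_+$ iff $\dot x = 0$, which from (\ref{refo}) means $(e^{\ln x \cdot Y_1}, \ldots, e^{\ln x \cdot Y_m}) \cdot \mathbf{R} \cdot C = 0$. The key observation is that $\ln x \cdot Y_i = \ln x \cdot (Y_i)^t$ is exactly the $i$-th component of the vector $\ln x \cdot C^t \in R^m$. So if I set $z = \ln x \cdot C^t$, then $(e^{\ln x \cdot Y_1}, \ldots, e^{\ln x \cdot Y_m}) = e^z$ and the equilibrium condition becomes $G(z) \cdot C = 0$, i.e., $G(z) \in N = K^\bot$. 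The relation function $G$ is being evaluated precisely on the set $K = \{z : z = x \cdot C^t \text{ for some } x\}$, and the substitution $z = \ln x \cdot C^t$ ranges over all of $K$ as $x$ ranges over $R^n_+$.

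The forward direction. Suppose $x \in R^n_+$ is an equilibrium. Set $z = \ln x \cdot C^t$. Since $\ln x \in R^n$, by the definition of $K$ we have $z \in K$. Then $G(z) = e^z \cdot \mathbf{R} = (e^{\ln x \cdot Y_1}, \ldots, e^{\ln x \cdot Y_m}) \cdot \mathbf{R}$, and the equilibrium condition $G(z) \cdot C = 0$ says exactly $G(z) \in N = K^\bot$. Hence $G(z) \in G(K) \cap K^\bot$, so the intersection is nonempty.

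The reverse direction. Suppose $G(K) \cap K^\bot \ne \emptyset$, so there is some $z \in K$ with $G(z) \in K^\bot = N$. By definition of $K$, write $z = w \cdot C^t$ for some $w \in R^n$. Now define $x \in R^n_+$ by $x_i = e^{w_i}$, i.e., $w = \ln x$. Then the $i$-th component of $z$ is $z_i = w \cdot Y_i = \ln x \cdot Y_i$, so $e^{z_i} = e^{\ln x \cdot Y_i} = x^{Y_i}$. Therefore $G(z) = (x^{Y_1}, \ldots, x^{Y_m}) \cdot \mathbf{R}$, and $G(z) \in N$ means $G(z) \cdot C = 0$, which is precisely $\dot x = 0$ by (\ref{refo}). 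Hence $x$ is a positive equilibrium.

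The only subtlety worth flagging is the bookkeeping around the exponential substitution: one must check that the map $x \mapsto \ln x \cdot C^t$ surjects onto $K$ (which holds because $\ln$ is a bijection $R^n_+ \to R^n$ and $C^t$ generates all of $K$ by definition) and that composing with the componentwise exponential recovers the monomial vector $(x^{Y_1}, \ldots, x^{Y_m})$. This is essentially a change of variables $x_i = e^{w_i}$ turning the mass-action monomials into exponentials, and there is no real obstacle — the proposition is a clean reformulation rather than a deep result. I expect the entire argument to be a direct verification in both directions with no hard estimate required.
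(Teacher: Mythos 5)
Your proof is correct and takes essentially the same route as the paper: the paper's (terser) proof simply notes that $G(K)$ is exactly the set of vectors $\left( e^{\ln x \cdot Y_1}, \ldots, e^{\ln x \cdot Y_m} \right) \cdot {\mathbf R}$ for $x \in R^n_+$ and that $K^{\bot} = N = \lcub z : z \cdot C = 0 \rcub$, which is precisely the substitution $z = \ln x \cdot C^t$ (with surjectivity onto $K$ via $x = e^w$) that you spell out in both directions. Your version is just a more explicit unwinding of the same observation, including the only point needing care, namely that $x \mapsto \ln x \cdot C^t$ ranges over all of $K$.
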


\begin{proof}: It suffices to notice that

\noindent $$ G(K) =  \lcub \left( \begin{array}{cccc} e^{\ln x
\cdot Y_1} & e^{\ln x \cdot Y_2} & \ldots & e^{\ln x \cdot Y_m}
\end{array} \right)_{1 \times m} \cdot {\mathbf R}: x \in R_+^n
\rcub $$

\noindent and

\noindent $$ K^{\bot} = N = \lcub z \in R^m : z \cdot C = 0 \rcub
$$

\noindent then from equation (\ref{inter}) and the preceding discussion, we
obtain the result.\end{proof}

\begin{remark} The idea of using the intersection of
two hypersurfaces to prove an existence result can be traced to
Felix E. Browder in (\cite{B}). Although it turns out that the
relation function $G(z):R^m \te R^m$ is not monotone,
the fact that we are dealing with a \emph{finite} dimensional ODE
still pulls us through.
\end{remark}

\section{The case of one linkage class}





\hspace{0.5cm} In this section we will solve the intersection
problem, assuming that the CRN has only one linkage class. The general strategy is as follows: to prove that $G(K) \cap K^{\bot} \ne \emptyset$, it is equivalent to showing that

$$ 0 \in \Pi_K \circ G(K), \hspace{7cm} (*) $$

\noindent where $\Pi_K : R^m \te K$ is the standard projection
operator. One sufficient condition for (*) to hold is that there exists a
large ball $B(r) = \lcub z \in K: \lab z \rab < r \rcub$ in $K$
such that

$$ (\Pi_K \circ G(z), z) < 0 $$

\noindent for $z \in \partial B(r) = \lcub z \in K: \lab z \rab =
r \rcub$. Then by the Brouwer Fixed Point theorem we obtain (*).

So basically, we need to estimate

$$ (\Pi_K \circ G(z), z) = (G(z), z) $$

\noindent for $z \in K$ and try to show that we can find a ball
$B(r)$ such that

$$ (G(z), z) < 0 $$

\noindent for $z \in \pa B(r)$. But here we encounter some technical difficulties: one quickly finds
that $(G(z), z)$ will be identically zero if $z$ lies in the $1-d$
subspace $D = \lcub z = a \cdot (1,1,\ldots,1) \in R^m : a \in R
\rcub $. Thus we will have to avoid this degenerate subspace $D$.

\begin{definition}We set ${\underline 1}_m = (1,1, \ldots, 1)_{1
\times m}, P_m = \lcub z \in R^m : {\underline 1}_m \cdot z = 0\rcub $, and we define the \emph{norm} on $R^m$ as $\lab z \rab_2 = \sum_{i=1}^m {z_i}^2$. If $z \in P_m$, we define another equivalent norm $ \lab z \rab = \max_{i = 1,2,\ldots, m} z_i$.
\end{definition}

We start from the following

\begin{lemma}Let $s \ge 1$ be fixed. Then

\begin{equation}\label{est}
 \max_{-L \le y_i \le 0, i = 1,2,\ldots, s}
y_1 + e^{y_1}y_2 + \ldots + e^{y_{s-1}}y_s + e^{y_s}(-L) \te
-\infty
\end{equation}

\noindent as $ L \te +\infty$.

\end{lemma}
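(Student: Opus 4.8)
The plan is to substitute $p_k := e^{y_k}$, so that $p_0 = 1$ (the first summand carries no exponential factor, i.e. $e^{y_0}$ with $y_0 = 0$) and $p_k \in (0,1]$ because $y_k \le 0$, and to rewrite the objective as $F = \sum_{k=1}^{s} p_{k-1}\ln p_k - L p_s$. The crucial structural observation I would exploit is that each of these $s+1$ summands is nonpositive: $p_{k-1} > 0$ while $\ln p_k \le 0$, and likewise $-Lp_s \le 0$. This converts a coupled maximization (whose exact optimizer satisfies a messy transcendental recursion) into something I can control summand by summand.

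I would then argue in contrapositive form. Fix an arbitrary threshold $R > 0$ and suppose some feasible point has $F \ge -R$; the goal is to show this forces $L$ to be bounded. Since a sum of nonpositive numbers is $\ge -R$, each summand must be $\ge -R$. The last summand gives $p_s \le R/L$. The $k$-th summand gives $p_{k-1}\ln p_k \ge -R$, that is, $p_k \ge \exp(-R/p_{k-1})$. Starting from $p_0 = 1$ and using that $t \mapsto \exp(-R/t)$ is increasing on $(0,\infty)$, I can cascade these inequalities to get $p_s \ge \ell_s$, where $\ell_0 = 1$ and $\ell_k = \exp(-R/\ell_{k-1})$. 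This $\ell_s$ is a finite tower of exponentials depending only on $R$ and $s$, hence a strictly positive constant $c(R,s) > 0$.

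Combining the two estimates yields $c(R,s) \le p_s \le R/L$, so $L \le R/c(R,s)$. Consequently, once $L > R/c(R,s)$ no feasible point can have value $\ge -R$, i.e. $\max F < -R$; as $R$ was arbitrary, this is exactly the claimed divergence to $-\infty$. The step I expect to be the main obstacle is conceptual rather than computational: one must recognize how to break the built-in tension — shrinking $p_s$ to tame the $-Lp_s$ term fights against keeping the products $p_{k-1}\ln p_k$ from blowing down — and the resolution is the decoupling into individual nonpositive terms together with the one-sided cascade anchored at the fixed value $p_0 = 1$. A reassuring feature of this route is that it never uses the lower constraints $y_i \ge -L$, only $y_i \le 0$, so it handles the relaxed problem simultaneously; the remaining verifications (monotonicity of $\exp(-R/t)$ and the induction defining $\ell_k$) are routine.
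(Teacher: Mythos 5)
Your proof is correct, but it is genuinely different from the paper's. The paper argues by induction on $s$: fixing $y_1,\ldots,y_k$, the maximum over $y_{k+1}$ of $\Psi(y_{k+1})$ is attained at $y_{k+1}\in\{0,\,-L,\,y_k-\ln L\}$, which yields the recursion $A_{k+1,L}\le\max\{-L,\,A_{k,L},\,A_{k,\ln L}\}$ and hence the divergence, with the iterated logarithm entering through the repeated replacement of $L$ by $\ln L$. You instead decouple the problem outright: after the substitution $p_k=e^{y_k}$ (with $p_0=1$), all $s+1$ summands of $F=\sum_{k=1}^{s}p_{k-1}\ln p_k - Lp_s$ are nonpositive, so $F\ge -R$ forces every summand to be $\ge -R$; the one-sided cascade $p_k\ge\exp(-R/p_{k-1})$, anchored at $p_0=1$ and propagated by the monotonicity of $t\mapsto \exp(-R/t)$, pins $p_s$ above the tower constant $\ell_s$ (where $\ell_0=1$, $\ell_k=\exp(-R/\ell_{k-1})$), while the last summand forces $p_s\le R/L$, giving the uniform bound $L\le R/\ell_s$. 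Each step checks out: the division by $p_{k-1}>0$ is legitimate, the tower is strictly positive by a trivial induction, and the threshold depends only on $(R,s)$, so the contrapositive does deliver $\max F<-R$ for all large $L$. What your route buys: it is non-inductive in the maximization structure (no case analysis of interior critical points versus endpoints), it produces an explicit quantitative threshold (a height-$s$ exponential tower in $R$, equivalently decay of the maximum like a negative $s$-fold iterated logarithm of $L$ -- the same rate implicit in the paper's recursion, but with visible constants), and, since you never use $y_i\ge -L$, it simultaneously handles the relaxed supremum over $(-\infty,0]^s$, whereas the paper's case analysis uses the endpoint $y_{k+1}=-L$ and the compactness of the feasible box. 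What the paper's route buys is that the recursive bound $A_{k+1,L}\le\max\{-L,A_{k,L},A_{k,\ln L}\}$ is exactly the form reused later (e.g., in estimate (\ref{key_est2}) of Lemma 3.3, which invokes the quantity $A_{s-2,L}$ directly), so the inductive formulation integrates more immediately with the rest of Section 3; your bound would serve there equally well after renaming.
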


\begin{proof} We will show (\ref{est}) by induction on $s$.

\noindent Step 1: If s =1, then

$$ \max_{-L \le y_1 \le 0} y_1 + e^{y_1}(-L) = \max \lcub -L, -L + e^{-L}(-L), -1-\ln L \rcub \te -\infty $$

\noindent as $L \te +\infty$.

\noindent Thus the lemma is true for $s =1$.

\noindent Step 2: Now suppose that the lemma is true for $s \le
k$. Define

$$ A_{k,L} = \max_{-L \le y_i \le 0, i =1,2,\ldots,k} \lcub y_1 + e^{y_1}y_2 + \ldots + e^{y_k}(-L) \rcub $$

\noindent then by assumption we know that $A_{k,L} \te -\infty$ as
$L \te +\infty$.

\noindent For $s = k+1$, let $ -L \le y_1, y_2, \ldots, y_k \le 0$
be fixed, then

$$ \Psi(y_{k+1};y_1,y_2,\ldots,y_k) = y_1 + e^{y_1}y_2 + \ldots + e^{y_k}y_{k+1} + e^{y_{k+1}}(-L) $$

\noindent will achieve its maximum at $y_{k+1} = 0, -L$ or $ y_k -
\ln L$. So we have

\begin{equation}
 A_{k+1,L} = \max_{-L \le y_i \le 0, i
=1,2,\ldots, k+1}\Psi(y_{k+1};y_1,y_2,\ldots,y_k) \le \max \lcub
-L, A_{k,L}, A_{k,\ln L} \rcub,
\end{equation}

\noindent therefore $A_{k+1,L} \te -\infty $ as $ L \te +\infty$.
Thus the lemma is true for $s = k+1$.

\noindent Step 3: Combining Steps 1 and 2 we see that the lemma is
true for all $ s \ge 1$.
\end{proof}

\bigskip

The fundamental lemma is the following

\begin{lemma} If the CRN has exactly one linkage class, then there exists a function $c(r)$ on $R^1$ with $c(r) \te +\infty$ as $r \te +\infty$ such that

\begin{equation}\label{decay}
(G(z), z) \le - c(\lab z \rab)e^{\lab z
\rab}
\end{equation}

\noindent for all $z \in P_m$.

\end{lemma}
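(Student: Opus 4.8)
The plan is to obtain an explicit expression for $(G(z),z)$, factor out $e^{\lab z \rab}$, and reduce the remaining sum to the quantity controlled by the preceding lemma. First I would write out the inner product. Since the $j$-th component of $G(z) = e^z \cdot {\mathbf R}$ is $\sum_i e^{z_i} r_{ij}$, and since every row of ${\mathbf R}$ sums to zero, one gets
$$ (G(z),z) = \sum_{i=1}^m e^{z_i} \sum_{j=1}^m r_{ij} z_j = \sum_{i \ne j} e^{z_i} r_{ij}(z_j - z_i). $$
Writing $r = \lab z \rab = \max_i z_i$ and $w_i = z_i - r \le 0$ (so $\max_i w_i = 0$ and $z_j - z_i = w_j - w_i$), this becomes
$$ (G(z),z) = e^{r} \sum_{i \ne j} e^{w_i} r_{ij}(w_j - w_i) =: e^{r} S(w), $$
so it suffices to find $c(r) \te +\infty$ with $S(w) \le -c(r)$. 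The crucial geometric input is that $z \in P_m$ forces depth: from $\sum_i z_i = 0$ and $\max_i z_i = r$ we get $\min_i w_i \le -mr/(m-1)$, so the depth $L := \max_i(-w_i)$ satisfies $L \ge mr/(m-1) \te +\infty$ as $r \te +\infty$.

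Next I would split $S(w)$ into ascending terms ($w_j > w_i$) and descending terms ($w_j \le w_i$). Each ascending term obeys $e^{w_i} r_{ij}(w_j - w_i) \le r_{ij}\, e^{w_i}(-w_i) \le \lab {\mathbf R} \rab / e$, using $\max_{t \le 0}(-t e^t) = 1/e$; hence the total ascending contribution is bounded above by a network-dependent constant $M$. The descending terms are all $\le 0$, so the entire task is to exhibit a package of descending terms tending to $-\infty$. Here weak reversibility with a \emph{single} linkage class is essential: starting from a complex realizing the maximum ($w = 0$), I can follow a reaction chain $Y_{i_0} \te Y_{i_1} \te \cdots \te Y_{i_s}$, with $s \le m-1$ and each $r_{i_k i_{k+1}} > 0$, terminating at the deepest complex ($w = -L$). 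On descending steps I estimate $r_{i_k i_{k+1}}(w_{i_{k+1}} - w_{i_k}) \le \tau({\mathbf R})(w_{i_{k+1}} - w_{i_k})$, and, after factoring out $\tau({\mathbf R})$ and absorbing the bounded ascending corrections into $M$, the chain contribution is (up to a telescoping remainder bounded by $s/e$) exactly the expression
$$ w_{i_1} + e^{w_{i_1}} w_{i_2} + \cdots + e^{w_{i_{s-1}}} w_{i_s} + e^{w_{i_s}}(-L) $$
estimated in the preceding lemma. That lemma gives that the maximum of this quantity over $-L \le w_{i_k} \le 0$ tends to $-\infty$ as $L \te +\infty$; since $s$ ranges over the finite set $\{1,\ldots,m-1\}$, the bound is uniform. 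Combining the pieces yields $S(w) \le M + C - \tau({\mathbf R})\,\phi(L)$ with $\phi(L) \te +\infty$, and substituting $L \ge mr/(m-1)$ gives $S(w) \le -c(r)$ with $c(r) \te +\infty$; multiplying through by $e^{r} = e^{\lab z \rab}$ proves the claim.

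The main obstacle I expect is precisely the bookkeeping of the previous step: cleanly matching the chain's telescoping sum — which carries genuine rate constants $r_{i_k i_{k+1}}$ and possibly non-monotone values $w_{i_k}$ — to the normalized expression of the preceding lemma, while checking that every discarded cross term (off-chain descending terms, the ascending corrections, and the $-w_k e^{w_k}$ telescoping remainders) is controlled by constants depending only on $\lab {\mathbf R} \rab$, $\tau({\mathbf R})$, and $m$. This subtlety is unavoidable: if the maximal complex reacts only to near-maximal complexes, no single step reveals the required negativity, and it must instead be harvested along the chain down to the deepest complex — which is exactly the mechanism the preceding lemma was constructed to supply.
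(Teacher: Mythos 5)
Your proposal is correct and takes essentially the same route as the paper's own proof: the normalization $y=z-\lab z\rab\,\underline 1_m$ (valid because each row of ${\mathbf R}$ sums to zero), the bound $\max_{t\le 0}\lab te^{t}\rab=1/e$ on all positive terms, and the harvesting of the decisive negative contribution along a reaction chain from a maximal complex ($w=0$) down to a minimal one ($w=-L$), telescoped into exactly the expression of Lemma 3.2. Your per-edge ascending/descending split merely streamlines the paper's three-block decomposition of ${\mathbf R}$ and its case analysis (${\mathbf R}_{13}\ne 0$ versus ${\mathbf R}_{13}=0$) --- your chain of length one is precisely the paper's first case --- and your explicit relation $L\ge mr/(m-1)$ makes quantitative a step the paper leaves implicit.
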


\begin{proof} For all $z \in P_m$, we have

\begin{equation}
\left \{
\begin{array}{ll}
\frac{(G(z),z)}{e^{\lab z \rab}} & = \frac{e^z{\mathbf R}z^t}{e^{\lab z \rab}} \\
 = e^{z-\lab z \rab \underline 1_m}{\mathbf R}z^t & = e^{z-\lab z \rab \underline 1_m}{\mathbf R}( z^t - \lab z \rab \underline 1_m^t) = e^y{\mathbf R}y^t
\end{array} \right .
\end{equation}

\noindent where $y = z - \lab z \rab \underline 1_m$.

\begin{notation} $B_L = \lcub y \in R^m : -L \le y_i \le
0, \forall i =1,2,\ldots,m; \min_{i}y_i = -L, \max_{i} y_i = 0
\rcub $. \end{notation}

Then lemma 3.3 is equivalent to the following

\bigskip

\noindent {\bf Claim:} Define $ F: B_L \te R$ as $F(y) = e^y
{\mathbf R}y^t$, $y \in B_L$, then

$$ C_L = \max_{y \in B_L}F(y) \te -\infty $$ as $L \te +\infty$.

\noindent Proof of Claim: Given $y \in B_L$, by relabeling $Y_1,
Y_2, \ldots, Y_m$ we can write $y$ as

$$ y = (\underline 0_{t_1}, \underline {\hat y}_{t_2}, -L \underline 1_{t_3}) $$

\noindent where $ \left \{
      \begin{array}{ll}
       -L < \hat y_i < 0, \forall 1 \le i \le t_2 \\
       t_1 + t_2 + t_3 = m, t_1 \ge 1, t_3 \ge 1
      \end{array} \right. $.

So we have

\begin{equation}
\left .
\begin{array}{ll}
F(y) & = e^y{\mathbf R}y^t = \left( \begin{array}{ccc}
e^{\underline 0_{t_1}} & e^{\underline {\hat y}_{t_2}} & e^{-L
\underline 1_{t_3}} \end{array} \right) \left( \begin{array}{ccc}
                  {\mathbf R}_{11} & {\mathbf R}_{12} & {\mathbf R}_{13} \\
                  {\mathbf R}_{21} & {\mathbf R}_{22} & {\mathbf R}_{23}  \\
                  {\mathbf R}_{31} & {\mathbf R}_{32} & {\mathbf R}_{33}
                 \end{array} \right)  \left( \begin{array}{c} \underline 0_{t_1}^t \\ \underline {\hat y}_{t_2}^t \\ -L \underline 1_{t_3}^t \end{array} \right) \\
\le & \underbrace{e^{\underline 0}{\mathbf R}_{12}\underline {\hat
y}^t + e^{\hat y}{\mathbf R}_{23}(-L \underline 1^t)}_{(1)} +
\underbrace{ e^{\underline 0}{\mathbf R}_{13}(-L \underline
1^t)}_{(2)} + \underbrace{ e^{-L \underline 1}{\mathbf R}_{33}(-L
\underline 1^t) +  e^{\underline {\hat y}}{\mathbf
R}_{22}\underline {\hat y}^t }_{(3)}
\end{array} \right .
\end{equation}

Estimate of term (3): we have

\begin{equation}
\left \{
\begin{array}{ll}
term (3) = & e^{-L}(-L)\underline 1_{t_3} {\mathbf R}_{33} \underline 1_{t_3}^t + e^{\underline {\hat y}}{\mathbf R}_{22}\underline {\hat y}^t \le t_3 \lab {\mathbf R} \rab L e^{-L} + t_2 \lab {\mathbf R} \rab (\max_{-\infty < s \le 0}\lab e^s s \rab) \\
& \le (t_2+t_3) \lab {\mathbf R} \rab e^{-1} \le \frac{m}{e} \lab
{\mathbf R} \rab
\end{array} \right .
\end{equation}

To estimate terms (1) and (2) we need to consider two cases:

\begin{itemize}

\item ${\mathbf R}_{13} \ne 0$, by which we mean there is at least
one element $ a_{ij} \in {\mathbf R}_{13}$ which is larger than
$0$, then

$$ term (2) \le (-L)a_{ij} \le -L \cdot \tau({\mathbf R}). $$

Notice that term (1) will always be nonpositive since each element
in ${\mathbf R}_{12}, {\mathbf R}_{23}$ is nonnegative, and each
element in $\underline {\hat y}_{t_2}$ and $-L \underline 1_{t_3}$
is strictly negative.

So we have

\begin{equation}\label{key_est1}
 term (1) + (2) + (3) \le \frac{m}{e} \lab
{\mathbf R} \rab - L \cdot \tau({\mathbf R})
\end{equation}

\item If ${\mathbf R}_{13} = 0$, notice that

$ {\mathbf R} = \left( \begin{array}{ccc}
                  {\mathbf R}_{11} & {\mathbf R}_{12} & {\mathbf R}_{13} \\
                  {\mathbf R}_{21} & {\mathbf R}_{22} & {\mathbf R}_{23}  \\
                  {\mathbf R}_{31} & {\mathbf R}_{32} & {\mathbf R}_{33}
                 \end{array} \right) $
corresponds to the decomposition of the CRN to three subnetworks $
N_1 = \lcub Y_1, Y_2, \ldots, Y_{t_1} \rcub, N_2 = \lcub
Y_{t_1+1}, Y_{t_1+2}, \ldots, Y_{t_1+t_2} \rcub, N_3 = \\ = \lcub
Y_{t_1+t_2+1}, Y_{t_1+t_2+2},\ldots,Y_m \rcub $. ${\mathbf
R}_{ii}$ corresponds to the relation matrix for $N_i, i=1,2,3$,
(Although $N_i$ together with ${\mathbf R}_{ii}$ may not
correspond to a weakly reversible chemical reaction network) while
${\mathbf R}_{ij}, i \ne j \in \lcub 1, 2, 3 \rcub$ gives the
information about reaction vectors from $N_i$ to $N_j$.

Now ${\mathbf R}_{13} = 0 $ means that each $Y_i$ belonging to
$N_1$ must pass $N_2$ to get to $Y_j$ which belongs to $N_3$. So
we can always find a ``reaction chain''

$$ Y_{i_1} \te Y_{i_2} \te \ldots \te Y_{i_s}, s \ge 3 $$

\noindent such that $ Y_{i_1} \in N_1$, $Y_{i_s} \in N_3$, and $
Y_{i_k} \in N_2$ for $ k \in \lcub 2, \ldots, s-1 \rcub$, which
means that $a_{i_1i_2}, a_{i_2i_3}, \ldots, a_{i_{s-1}i_s} > 0$
with

$$ a_{i_1i_2} \in {\mathbf R}_{12}, a_{i_2i_3}, a_{i_3i_4}, \ldots, a_{i_{s-2}i_{s-1}} \in {\mathbf R}_{22}, a_{i_{s-1}i_s} \in {\mathbf R}_{23}. $$

So now we have

\begin{equation}\label{key_est2}
\left .
\begin{array}{ll}
term (1) + (2) + (3) \le  a_{i_1i_2}y_{i_2} + e^{y_{i_2}}a_{i_2i_3}y_{i_3} + \ldots + e^{y_{i_{s-1}}} a_{i_{s-1}i_s} (-L) + \frac{m}{e} \lab {\mathbf R} \rab \\
 \le \tau({\mathbf R}) [ y_{i_2} + e^{y_{i_2}}y_{i_3} + e^{y_{i_3}}y_{i_4} + \ldots + e^{y_{i_{s-2}}}y_{i_{s-1}} + e^{y_{i_{s-1}}}(-L)] + \frac{m}{e} \lab {\mathbf R} \rab \\
\le  \tau({\mathbf R}) \cdot A_{s-2, L} + \frac{m}{e} \lab {\mathbf R} \rab ,
\end{array} \right .
\end{equation}

\end{itemize}

\noindent where in the last inequality of (\ref{key_est2}) we have used lemma 3.2.

\noindent Combining (\ref{key_est1}), (\ref{key_est2}) we see that

\begin{equation}\label{final_est}
 \max_{y \in B_L} F(y) \le \max \lcub
\frac{m}{e}\lab {\mathbf R} \rab - L \cdot \tau({\mathbf R}) ,
 \tau({\mathbf R}) \cdot
A_{s-2, L} + \frac{m}{e} \lab {\mathbf R} \rab \rcub,
\end{equation}

\noindent thus $C_L = \max_{y \in B_L} F(y) \te -\infty$ as $L \te
+\infty$. The proof of the ${\bf Claim}$ is now complete, and the
proof of Lemma 3.3 follows from that of the ${\bf Claim}$.

\end{proof}

\begin{remark}\rm This lemma was obtained by Adrian Nachman, around 1983. Using this lemma, Nachman was able to show the existence of positive equilibrium points for weakly reversible chemical reaction networks with one linkage class. After this Feinberg showed the existence of equilibrium points when the CRN is weakly reversible and has the property that the deficiency of the CRN is equal to the sum of the deficiencies of each linkage class subnetwork.
\end{remark}

\begin{corollary} If $ G(z) = e^z \cdot {\mathbf R}$ where $\mathbf R$ is the relation matrix for some CRN of only one linkage class, then for all $\tilde C > 0$, there exists $r(\tilde C,{\mathbf R}) > 0$ such that for all $w \in P_m$, $\lab w \rab_2 \le \tilde C$, we have

$$ (G(z+w),z) < 0, \hspace{0.2cm}  z \in P_m, \lab z \rab_2 > r(\tilde C,{\mathbf R}).$$

\end{corollary}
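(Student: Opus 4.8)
The plan is to treat the bounded shift $w$ as a perturbation and to reduce everything to the decay estimate already in hand, namely (\ref{decay}). First I would set $z' = z + w$ and observe that $z' \in P_m$, since $P_m$ is a linear subspace and $z, w \in P_m$. Expanding the pairing and using linearity of the inner product in its second slot gives the decomposition
$$ (G(z+w), z) = (G(z'),\, z' - w) = (G(z'), z') - (G(z'), w). $$
The first term is exactly the quantity controlled by the fundamental lemma: since $z' \in P_m$, estimate (\ref{decay}) yields $(G(z'), z') \le -c(\lab z' \rab)\, e^{\lab z' \rab}$ with $c(\lab z' \rab) \te +\infty$ as $\lab z' \rab \te +\infty$. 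Thus the whole argument reduces to showing that the cross term $(G(z'), w)$ is swamped by this decaying quantity.

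Next I would bound the cross term elementarily. Writing $(G(z'), w) = e^{z'} {\mathbf R} w^t = \sum_{i,j} e^{z'_i} r_{ij} w_j$ and using $e^{z'_i} \le e^{\lab z' \rab}$, together with $\lab w_j \rab \le \sqrt{\tilde C}$ (a consequence of $\sum_j w_j^2 = \lab w \rab_2 \le \tilde C$) and the row-sum identity $\sum_j \lab r_{ij} \rab = 2\lab r_{ii} \rab \le 2\lab {\mathbf R} \rab$, I obtain
$$ \lab (G(z'), w) \rab \le 2m\,\lab {\mathbf R} \rab\, \sqrt{\tilde C}\; e^{\lab z' \rab}. $$
The decisive feature is that $w$ contributes only the constant coefficient $2m\lab {\mathbf R} \rab \sqrt{\tilde C}$ in front of $e^{\lab z' \rab}$; there is no factor that grows with $\lab z' \rab$. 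Combining the two bounds gives
$$ (G(z+w), z) \le e^{\lab z' \rab}\bigl[\, 2m\,\lab {\mathbf R} \rab\, \sqrt{\tilde C} - c(\lab z' \rab)\,\bigr], $$
which is strictly negative as soon as $\lab z' \rab$ is large enough that $c(\lab z' \rab) > 2m\,\lab {\mathbf R} \rab\, \sqrt{\tilde C}$; such a threshold $R_0 = R_0(\tilde C, {\mathbf R})$ exists precisely because $c(r) \te +\infty$.

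Finally I would convert ``$\lab z' \rab$ large'' back into the hypothesis ``$\lab z \rab_2 > r$''. On $P_m$ one has $\sum_i z_i = 0$, so the positive entries carry half the $\ell^1$ mass and $\lab z \rab = \max_i z_i \ge \frac{1}{2m}\sqrt{\lab z \rab_2}$, while the shift satisfies $\lab z' \rab = \max_i(z_i + w_i) \ge \lab z \rab - \sqrt{\tilde C}$. Hence $\lab z' \rab \ge \frac{1}{2m}\sqrt{\lab z \rab_2} - \sqrt{\tilde C}$, and choosing $r(\tilde C, {\mathbf R})$ large enough that this lower bound exceeds $R_0$ forces $c(\lab z' \rab) > 2m\,\lab {\mathbf R} \rab\, \sqrt{\tilde C}$, completing the argument. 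The only genuine bookkeeping is in this last step: matching the sup-type quantity $\lab \cdot \rab$ of (\ref{decay}) against the Euclidean quantity $\lab \cdot \rab_2$ in the statement and absorbing the shift by $w$. The conceptual heart of the matter -- that a uniformly bounded shift costs only a \emph{constant} multiple of $e^{\lab z' \rab}$ and is therefore dominated by the $-c(\lab z' \rab)\,e^{\lab z' \rab}$ decay -- is immediate once the decomposition above is written down, so I do not expect a serious obstacle here.
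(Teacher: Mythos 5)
Your proof is correct, and it takes a genuinely different route from the paper's. The paper absorbs the shift into the matrix: it writes $(G(z+w),z) = e^{z}\cdot\tilde{\mathbf R}(w)\cdot z^t$ with $\tilde{\mathbf R}(w) = Diag(e^{w_1},\ldots,e^{w_m})\cdot{\mathbf R}$, observes that $\tilde{\mathbf R}(w)$ is again a relation matrix of a one-linkage-class CRN, and then reruns the \emph{proof} of Lemma 3.3 for this perturbed matrix, using compactness of $\lcub w \in P_m : \lab w \rab_2 \le \tilde C\rcub$ together with $\min_{\lab w \rab_2 \le \tilde C}\tau(\tilde{\mathbf R}(w)) > 0$ to make the estimate (\ref{final_est}) uniform in $w$. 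You instead keep ${\mathbf R}$ fixed, invoke only the \emph{statement} of Lemma 3.3 at the shifted point $z' = z+w \in P_m$, and control the bilinear cross term $(G(z'),w)$ by the elementary bound $2m\lab{\mathbf R}\rab\sqrt{\tilde C}\,e^{\lab z'\rab}$, obtained from $e^{z'_i}\le e^{\lab z'\rab}$ and the zero-row-sum structure of ${\mathbf R}$; since this prefactor is a constant, it is eventually dominated by $c(\lab z'\rab)\te+\infty$ from (\ref{decay}). Your norm bookkeeping is also sound under the paper's definition $\lab z\rab_2=\sum_i z_i^2$: the zero-sum condition on $P_m$ gives $\lab z\rab \ge \frac{1}{2m}\sqrt{\lab z\rab_2}$ via the $\ell^1$--$\ell^2$ comparison, and $\lab z'\rab \ge \lab z\rab - \sqrt{\tilde C}$ by evaluating at the argmax of $z$. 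Your route buys self-containedness: it treats Lemma 3.3 as a black box, with an explicit threshold, and avoids the paper's somewhat informal assertion that ``the proof of Lemma 3.3 still holds'' uniformly over the compact family of perturbed matrices. What the paper's route buys in exchange is the structural observation that $Diag(e^{w})\cdot{\mathbf R}$ stays inside the class of relation matrices, which makes the stability of the whole estimate scheme under such diagonal perturbations transparent and is the form in which the corollary is exploited later (e.g., in the boundary estimates of Lemma 4.5).
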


\begin{proof} First observe that

$$ (G(z+w),z) = e^{z+w} \cdot {\mathbf R} \cdot z^t = e^z \cdot \tilde{\mathbf R}(w) \cdot z^t $$

\noindent where $\tilde{\mathbf R}(w) = Diag(e^{w_1}, e^{w_2},
\ldots, e^{w_m}) \cdot {\mathbf R} $, with $ Diag(e^{w_1},
e^{w_2}, \ldots, e^{w_m})$ being the $m \times m$ diagonal matrix.
Thus $\tilde{\mathbf R}(w)$ is still a relation matrix of some CRN.

Think about $\tilde{\mathbf R}(w) = Diag(e^{w_1}, e^{w_2}, \ldots,
e^{w_m}) \cdot {\mathbf R} $ as a perturbation of ${\mathbf R}$,
in the class of relation matrix. Fix  $\tilde C > 0$, then for $w \in
\lcub w \in P_m, \lab w \rab_2 \le \tilde C \rcub$, the proof of Lemma 3.3 still
holds for $\tilde{\mathbf R}(w)$, specifically the estimate (\ref{final_est})
holds for $w \in P_m, \lab w \rab_2 \le \tilde C$, with ${\mathbf R}$ replaced by
$\tilde{\mathbf R}(w) $. Now observe that

$$ \min_{\lab w \rab_2 \le \tilde C} \tau(\tilde{\mathbf R}(w)) > 0, $$

\noindent then due to the compactness of $\lcub w \in P_m, \lab w \rab_2 \le \tilde C \rcub$ we can choose $r(\tilde C, \mathbf R) > 0$
large enough such that

$$ (G(z+w), z) = (e^z \cdot  \tilde{\mathbf R}(w), z) < 0 $$

\noindent holds for $w \in \lcub w \in P_m, \lab w \rab_2 \le \tilde C \rcub $
uniformly, where $z \in P_m, \lab z \rab_2 > r(\tilde C, \mathbf R)$.
\end{proof}

\begin{corollary} If $ G(z) = e^z \cdot {\mathbf R}$ where $\mathbf R$ is the relation matrix for some CRN of only one linkage class, then there exists constant $t_0({\mathbf R}) > 0 $ such that

$$ (G(z), z) < t_0({\mathbf R}), \hspace{0.2cm}  z \in P_m$$

\end{corollary}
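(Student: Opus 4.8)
The plan is to combine the strong decay estimate of Lemma 3.3 with a compactness argument, splitting $P_m$ into a region where $\lab z \rab = \max_i z_i$ is large and a complementary bounded region on which $(G(z),z)$ is automatically bounded by continuity.

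First I would invoke Lemma 3.3, which gives $(G(z),z) \le -c(\lab z \rab)e^{\lab z \rab}$ for all $z \in P_m$, with $c(r) \te +\infty$ as $r \te +\infty$. Since $\underline 1_m \cdot z = 0$ on $P_m$, not every coordinate can be negative, so $\lab z \rab = \max_i z_i \ge 0$ always, i.e.\ $\lab z \rab$ ranges over $[0,+\infty)$. Because $c(r)\te +\infty$, I can fix $R_0 > 0$ with $c(r) > 0$ for all $r \ge R_0$; then $(G(z),z) \le -c(\lab z \rab)e^{\lab z \rab} < 0$ whenever $\lab z \rab \ge R_0$. On this ``far'' region the desired bound already holds with any positive constant.

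Next I would control the ``near'' region $\lcub z \in P_m : \lab z \rab \le R_0 \rcub$. The key observation is that this set is compact. Indeed, the constraint $\sum_i z_i = 0$ together with $\max_i z_i \le R_0$ forces, for each index $j$, the bound $z_j = -\sum_{i \ne j} z_i \ge -(m-1)R_0$; hence every coordinate lies in $[-(m-1)R_0,\, R_0]$ and the set is closed and bounded. This is exactly the content, recorded after Definition 3.4, that $\lab \cdot \rab$ is an \emph{equivalent} norm on $P_m$. Since $z \mapsto (G(z),z) = e^z \mathbf{R} z^t$ is continuous, it attains a finite maximum $M$ on this compact set.

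Finally I would set $t_0(\mathbf{R}) = \max\lcub M, 0 \rcub + 1 > 0$; combining the two regions yields $(G(z),z) < t_0(\mathbf{R})$ for every $z \in P_m$, as claimed. I do not expect a serious obstacle here, since Lemma 3.3 does the real work. The only point genuinely requiring care is recognizing that bounding $\max_i z_i$ from above already yields compactness \emph{on $P_m$} (something that would fail on all of $R^m$, where the max controls only the coordinates from above); this is precisely what lets me apply Lemma 3.3 on the complement of a truly bounded set and reduce the remaining estimate to continuity on a compact domain.
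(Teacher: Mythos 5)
Your proof is correct and takes essentially the same route as the paper's: Lemma 3.3 yields $(G(z),z)<0$ once $\lab z \rab$ exceeds some threshold $r_0$, and continuity on the remaining set $\lcub z \in P_m : \lab z \rab \le r_0 \rcub$ supplies a finite bound, from which $t_0(\mathbf{R})>0$ is read off. The only difference is cosmetic: you explicitly verify the compactness of that set (the constraint $\sum_i z_i = 0$ forcing $z_j \ge -(m-1)r_0$ for each $j$), a point the paper leaves implicit when it asserts the maximum over $B_{P_m}(r_0)$ is finite.
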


\begin{proof} From Lemma $3.3$ we know that
there exists $ r_0 > 0 $, such that for all $ z \in P_m, \lab z
\rab > r_0$, we have $(G(z), z) < 0$. Let $ t_0 = \max(0, \max_{z
\in P_m, \lab z \rab \le r_0} \lab (G(z), z) \rab) $. Since
$(G(z), z)$ is continuous for $z \in B_{P_m}(r_0) = \lcub z \in
P_m : \lab z \rab \le r_0 \rcub$, we have $0 < t_0 < +\infty $.
\end{proof}

From Lemma $3.3$ we obtain easily

\begin{lemma} If the CRN has only one linkage class, then for any subspace $H$ of $P_m$, we have that

\begin{equation}
 G(H) \cap H^{\bot} \ne \emptyset
\end{equation}

\end{lemma}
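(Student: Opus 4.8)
The plan is to run the same strategy $(*)$ set up at the start of this section, now with the subspace $H$ in place of $K$. Let $\Pi_H : R^m \te H$ denote the orthogonal projection onto $H$. For $z \in H$ the condition $G(z) \in H^{\bot}$ is equivalent to $\Pi_H G(z) = 0$, so
\[
 G(H) \cap H^{\bot} \ne \emptyset \Longleftrightarrow 0 \in \Pi_H \circ G(H).
\]
Thus it suffices to produce a zero of the continuous vector field $V = \Pi_H \circ G : H \te H$, and for that it is enough, by the Brouwer/degree argument recalled just before Lemma 3.2, to find a radius $r$ along whose sphere $V$ points strictly inward.

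First I would reduce the boundary condition to the a priori estimate already in hand. For $z \in H$ write $G(z) = \Pi_H G(z) + (I - \Pi_H)G(z)$ with $(I - \Pi_H)G(z) \in H^{\bot}$; since $z \in H$ is orthogonal to $H^{\bot}$, the second term drops out of the pairing with $z$, giving
\[
 (V(z), z) = (\Pi_H G(z), z) = (G(z), z), \qquad z \in H.
\]
Because $H \subs P_m$, every $z \in H$ lies in $P_m$, so Lemma 3.3 applies verbatim: $(G(z), z) \le - c(\lab z \rab)\,e^{\lab z \rab}$ with $c(\lab z \rab) \te +\infty$. Hence $(V(z), z) = (G(z),z) < 0$ once $\lab z \rab$ is large, and since $\lab \cdot \rab$ and $\lab \cdot \rab_2$ are equivalent on the finite dimensional space $P_m$, there is an $r = r(\mathbf{R}) > 0$ with $(V(z), z) < 0$ for all $z \in H$ with $\lab z \rab_2 = r$.

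It then remains to turn the inward pointing condition into a zero. I would use the admissible homotopy $\Phi_s(z) = -(1-s)z + s\,V(z)$, $s \in [0,1]$: on the sphere $\lab z \rab_2 = r$ one computes
\[
 (\Phi_s(z), z) = -(1-s)r^2 + s\,(V(z),z) < 0,
\]
so $\Phi_s$ never vanishes there, and consequently $\deg(V, B(r), 0) = \deg(-I, B(r), 0) = (-1)^{\dim H} \ne 0$, where $B(r) = \lcub z \in H : \lab z \rab_2 < r \rcub$. A nonzero degree forces a zero $z_0 \in B(r)$ of $V$, i.e. $\Pi_H G(z_0) = 0$, i.e. $G(z_0) \in H^{\bot}$, which is the sought intersection point. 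One may equally invoke Brouwer's theorem directly, exactly as in the passage preceding Lemma 3.2.

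I do not expect a genuine obstacle here, since all the analytic weight sits in Lemma 3.3, which is already proved; the remaining ingredients are the projection identity, the equivalence of the two norms on $P_m$, and a routine degree computation. The single point worth flagging is the role of the hypothesis $H \subs P_m$: it is exactly what lets us avoid the degenerate direction $D = \mathrm{span}(\underline 1_m)$ on which $(G(z),z) \equiv 0$, because $D \cap P_m = \lcub 0 \rcub$. Without this hypothesis Lemma 3.3 would be unavailable and the boundary estimate could fail, so the restriction to subspaces of $P_m$ is essential rather than cosmetic.
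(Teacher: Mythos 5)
Your proof is correct and takes essentially the same route as the paper: reduce the intersection statement to finding a zero of $\Pi_H \circ G$ on $H$ via the identity $(\Pi_H G(z), z) = (G(z), z)$, invoke Lemma 3.3 (available precisely because $H \subset P_m$) to get the strict inequality $(G(z),z) < 0$ on a large sphere in $H$, and conclude by a Brouwer/degree argument. Your explicit homotopy to $-I$ simply spells out the topological step that the paper compresses into its appeal to the Brouwer Fixed Point Theorem.
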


\begin{proof} Since $R^m = H \oplus H^{\bot}$, we
can define the projection operator $\Pi_H: R^m \te H$ as

\begin{equation}
 \Pi_H(z) = z_1
\end{equation}

\noindent where $ z = z_1 + z_2$, $z_1 \in H, z_2 \in H^{\bot}$.

Then $G(H) \cap H^{\bot} \ne \emptyset$ if and only if $0 \in
\Pi_H(G(H))$. From (\ref{decay}) we know that for $r$ sufficiently large,

$$ (\Pi_H \circ G(z), z) = (G(z), z) \le -c(\lab z \rab)e^{\lab z \rab} < 0 $$
\noindent for $ z \in S_{H}(r) = \lcub z \in H : \lab z \rab = r
\rcub$.

By Brouwer's Fixed Point Theorem, there exists  $z_0 \in B_H(r) =
\lcub z \in H: \lab z \rab < r \rcub$ such that $\Pi_H \circ
G(z_0) = 0$. Thus we have

$$ 0 \in \Pi_H(G(H)) $$

\noindent which means $G(H) \cap H^{\bot} \ne \emptyset$.
\end{proof}

\section{The case of $l$ linkage classes}

 In this section we will discuss the case
of $l$ linkage classes. Without loss of generality we assume the
relation matrix $\mathbf{R}$ has block form (\ref{block}), and each $\mathbf
R_i$ is a $m_i \times m_i$ matrix for $i = 1, 2, \ldots, l$.

The content of this section is divided into two parts: the first
part is about the estimate of $(G(z),z)$. Similar to the case of
one linkage class, there exists an $l$ dimensional degenerate subspace
$D_l$ for $G$, i.e, $(G(z),z) = 0$ restricted to $D_l$. We will give an
estimate of $(G(z),z)$ restricted to the orthogonal subspace
$P_{m,l}$ of $D_l$ (Lemma 4.3), and correspondingly show that $G(H) \cap H^{\bot} \ne \emptyset$ when
$H \subset P_{m,l}$ (Lemma 4.4).

\bigskip

\begin{definition} Let $P_{m,l} = \lcub z \in R^m : \sum_{i =
1}^{m_1}z_i = 0, \sum_{i = m_1 + 1}^{m_1+m_2}z_i = 0, \ldots,
\sum_{i = m - m_l + 1}^{m}z_i = 0 \rcub $. Since $P_{m,l} \subset P_m$, $\lab z \rab$ is still a
norm for $z \in P_{m,l}$.

For $z = ( z^{(1)},  z^{(2)}, \dots,  z^{(l)}) \in P_{m,l}$ where $$  z^{(1)} = (z_1, z_2, \dots, z_{m_1}),  z^{(2)} = (z_{m_1+1}, z_{m_1+2}, \dots, z_{m_1+m_2}), \dots, z^{(l)} = (z_{m-m_l+1}, z_{m-m_l+2}, \dots, z_m),$$ we define $\Pi_i: P_{m,l} \te P_{m_i} $ as

$$ \Pi_i(z) = z^{(i)}, \hspace{.2in} i = 1,2, \dots, l. $$

\end{definition}

\begin{remark} By Lemma 3.3,  there exists $ 0 < R_0 < +\infty$ such that

\begin{equation}\label{uniform_est}
 (e^{z^{(i)}}{\mathbf R_i}, z^{(i)}) < 0
\end{equation}

\noindent for all $z^{(i)} \in P_{m_i}, \lab z^{(i)} \rab_2 > R_0,
i =1,2,\ldots,l$. \end{remark}

We have the following

\begin{lemma} If the CRN has exactly $l$ linkage classes, then there exists a function $c^*(r)$ on $R^1$ with $c^*(r) \te +\infty$ as $r \te +\infty$ such that

\begin{equation}\label{est_l}
 (G(z), z) \le - c^*(\lab z \rab)e^{\lab z
\rab}
\end{equation}

\noindent for all $z \in P_{m,l}$.

\end{lemma}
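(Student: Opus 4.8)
The plan is to exploit the block-diagonal structure (\ref{block}) of $\mathbf{R}$ in order to reduce the estimate to the one-linkage-class case already handled in Lemma 3.3. First I would observe that, because $\mathbf{R}$ is block diagonal and $z = (z^{(1)}, z^{(2)}, \dots, z^{(l)})$ with $z^{(i)} = \Pi_i(z)$, the form splits as
$$ (G(z), z) = e^z \mathbf{R} z^t = \sum_{i=1}^l e^{z^{(i)}} \mathbf{R}_i (z^{(i)})^t = \sum_{i=1}^l \left( e^{z^{(i)}} \mathbf{R}_i, z^{(i)} \right). $$
Moreover, the defining constraints of $P_{m,l}$ force $\underline 1_{m_i} \cdot z^{(i)} = 0$, i.e. $z^{(i)} \in P_{m_i}$, so each summand is exactly the one-linkage-class form attached to the single-linkage-class relation matrix $\mathbf{R}_i$.

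Since each $\mathbf{R}_i$ is the relation matrix of a CRN with one linkage class, Lemma 3.3 applies blockwise and supplies functions $c_i(r) \te +\infty$ with $(e^{z^{(i)}} \mathbf{R}_i, z^{(i)}) \le -c_i(\lab z^{(i)} \rab) e^{\lab z^{(i)} \rab}$ for $z^{(i)} \in P_{m_i}$. By Corollary 3.5 each block form also admits a uniform upper bound $(e^{z^{(i)}} \mathbf{R}_i, z^{(i)}) < t_0(\mathbf{R}_i)$ on all of $P_{m_i}$; I would set $T_0 = \sum_{i=1}^l t_0(\mathbf{R}_i) > 0$. Next I would combine these. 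Since $\lab z \rab = \max_{1 \le i \le l} \lab z^{(i)} \rab$, choose an index $i_0$ with $\lab z^{(i_0)} \rab = \lab z \rab$, bound the $i_0$-th summand via Lemma 3.3 and the remaining summands via Corollary 3.5 to obtain
$$ (G(z), z) \le -c_{i_0}(\lab z \rab) e^{\lab z \rab} + \sum_{i \ne i_0} t_0(\mathbf{R}_i) \le -\underline c(\lab z \rab) e^{\lab z \rab} + T_0, $$
where $\underline c(r) = \min_{1 \le i \le l} c_i(r)$ still tends to $+\infty$, being the minimum of finitely many such functions. Finally, setting $c^*(r) = \underline c(r) - T_0 e^{-r}$ gives $c^*(r) \te +\infty$ and $(G(z), z) \le -c^*(\lab z \rab) e^{\lab z \rab}$, as desired.

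The main obstacle, though a mild one, is that Lemma 3.3 by itself does not suffice: for the blocks $i \ne i_0$ the quantity $\lab z^{(i)} \rab$ may be small, so the contribution $-c_i(\lab z^{(i)} \rab) e^{\lab z^{(i)} \rab}$ need not be negative and cannot simply be dropped. This is precisely where the uniform upper bound of Corollary 3.5 is essential: it controls the non-dominant blocks, while the exponential decay is extracted from the single dominant block $i_0$ alone. The correction term $-T_0 e^{-r}$ is harmless since it vanishes as $r \te +\infty$, which is all that the conclusion $c^*(r) \te +\infty$ requires.
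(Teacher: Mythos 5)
Your proposal is correct and follows essentially the same route as the paper's proof of this lemma: the block-diagonal form (\ref{block}) splits $(G(z),z)$ into the $l$ one-linkage-class forms $e^{z^{(i)}}{\mathbf R}_i (z^{(i)})^t$ with each $z^{(i)} \in P_{m_i}$, the dominant block with $\lab z^{(i_0)} \rab = \lab z \rab$ is estimated by Lemma 3.3, and the remaining blocks are absorbed via the uniform bounds $t_0({\mathbf R}_i)$, exactly as in the paper's definition $c^*(r) = \min_i c_i(r) - \sum_i t_i$ (your variant $c^*(r) = \underline{c}(r) - T_0 e^{-r}$ is an inessential difference). One trivial correction: the uniform upper bound $t_0({\mathbf R})$ is Corollary 3.6 in the paper's numbering, not Corollary 3.5, which is the perturbation estimate with radius $r(\tilde C, {\mathbf R})$.
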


\begin{proof} Using the estimate in Lemma 3.3 for one linkage class we have

\begin{equation}
\left \{
\begin{array}{ll}
(G(z),z) = (e^{\underline z^{(1)}} \hspace{0.1cm} e^{\underline
z^{(2)}} \hspace{0.1cm} \ldots \hspace{0.1cm}
e^{\underline z^{(l)}}) {\mathbf R} \left( \begin{array}{c}
{\underline z^{(1)}}^t \\ {\underline z^{(2)}}^t \\ \vdots \\
{\underline z^{(l)}}^t \end{array} \right)
 =  e^{\underline z^{(1)}}{\mathbf R_1}{\underline z^{(1)}}^t + e^{\underline z^{(2)}}{\mathbf R_2}{\underline z^{(2)}}^t + \ldots  \\ + \hspace{0.1cm} e^{\underline z^{(l)}}{\mathbf R_l}{\underline z^{(l)}}^t
 \le - \min \lcub c_1(\lab z \rab), c_2(\lab z \rab), \ldots, c_l(\lab z \rab) \rcub e^{\lab z \rab} + \sum_{i=1}^l t_i \le -c^*(\lab z \rab)e^{\lab z \rab}
\end{array} \right .
\end{equation}

\noindent where $c^*(r) = \min \lcub c_1(r), c_2(r), \ldots,
c_l(r) \rcub - \sum_{i=1}^l t_i$ , with $c_i(r), t_i$
corresponding to the relation matrix ${\mathbf R_i}, i = 1, 2,
\ldots, l$, respectively. \end{proof}

From Lemma 4.3 we obtain easily

\begin{lemma} If the CRN has exactly $l$ linkage classes, then for any subspace $H$ of $P_{m,l}$, we have that

\begin{equation}
 G(H) \cap H^{\bot} \ne \emptyset
\end{equation}

\end{lemma}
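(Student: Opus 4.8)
The plan is to transcribe the proof of Lemma 3.6 verbatim into the $l$-linkage-class setting, with the only change being that the ambient degeneracy-free space $P_m$ is replaced by $P_{m,l}$ and the decay estimate of Lemma 3.3 is replaced by its $l$-linkage-class analogue, Lemma 4.3. The reason this works is that the topological argument in Lemma 3.6 used nothing about $P_m$ beyond the single fact that $(G(z),z)$ becomes and stays strictly negative as $\lab z \rab \te +\infty$; Lemma 4.3 supplies exactly this coercivity on all of $P_{m,l}$, hence on any subspace $H \subset P_{m,l}$.

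First I would record the reformulation. Since $H \subset P_{m,l} \subset R^m$, we have the orthogonal splitting $R^m = H \oplus H^{\bot}$; let $\Pi_H : R^m \te H$ denote the orthogonal projection. Then $G(H) \cap H^{\bot} \ne \emptyset$ holds if and only if $0 \in \Pi_H(G(H))$, since a value $G(z) \in G(H)$ lands in $H^{\bot}$ precisely when its $H$-component vanishes. The key point enabling the estimate to transfer to the projected field is the identity $(\Pi_H \circ G(z), z) = (G(z), z)$ for every $z \in H$: writing $G(z) = \Pi_H G(z) + w$ with $w \in H^{\bot}$ and using $z \perp w$ annihilates the $H^{\bot}$ contribution.

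Next I would install the inward-pointing boundary condition. By Lemma 4.3, for all $z \in P_{m,l}$, and in particular for $z \in H$, we have $(G(z),z) \le - c^*(\lab z \rab) e^{\lab z \rab}$ with $c^*(r) \te +\infty$. Since the norms $\lab \cdot \rab$ and $\lab \cdot \rab_2$ are equivalent on the finite-dimensional space $H$, I can fix $r$ so large that $(G(z),z) < 0$ on the Euclidean sphere $S_H(r) = \lcub z \in H : \lab z \rab_2 = r \rcub$. Combined with the identity above, this says the continuous vector field $\Pi_H \circ G$ points strictly inward along $\pa B_H(r)$. A standard Brouwer argument then produces an interior zero: were $\Pi_H \circ G$ nonvanishing on the closed ball $\overline{B_H(r)}$, the map $z \mapsto r\,\Pi_H G(z) / \lab \Pi_H G(z) \rab_2$ would be a continuous self-map of $\overline{B_H(r)}$, whose Brouwer fixed point $z^*$ would satisfy $\Pi_H G(z^*) = \mu z^*$ with $\mu > 0$, forcing $(\Pi_H G(z^*), z^*) = \mu r^2 > 0$ and contradicting the inward-pointing condition. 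Hence $\Pi_H \circ G$ vanishes at some $z_0 \in B_H(r)$, so $0 \in \Pi_H(G(H))$ and therefore $G(H) \cap H^{\bot} \ne \emptyset$.

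I do not anticipate any genuine obstacle in this lemma itself: all of the real work, namely producing a coercive decay estimate in the presence of several linkage classes and the attendant $l$-dimensional degenerate subspace $D_l$, has already been absorbed into Lemma 4.3 (which in turn rests on the block structure (\ref{block}) and the one-class estimate of Lemma 3.3). The only thing worth verifying carefully is the direction of the normalization in the Brouwer step, so that the inward-pointing hypothesis genuinely forces a zero rather than merely a boundary critical point; beyond that bookkeeping, the proof is the faithful $P_{m,l}$-transcription of Lemma 3.6.
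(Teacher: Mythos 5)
Your proposal is correct and follows essentially the same route as the paper: the paper's proof of this lemma likewise reduces $G(H) \cap H^{\bot} \ne \emptyset$ to $0 \in \Pi_H(G(H))$, uses the identity $(\Pi_H \circ G(z), z) = (G(z), z)$ together with the coercivity estimate of Lemma 4.3 to make the field point inward on a large sphere in $H$, and concludes by Brouwer's fixed point theorem. The only differences are cosmetic: you spell out the Brouwer step (which the paper invokes without detail) and work with the Euclidean sphere rather than the $\lab \cdot \rab$-sphere, both of which are harmless since the norms are equivalent on the finite-dimensional space $H$.
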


\begin{proof} Since $R^m = H \oplus H^{\bot}$, we
can define the projection operator $\Pi_H: R^m \te H$ as

\begin{equation}
 \Pi_H(z) = z_1
\end{equation}

\noindent where $ z = z_1 + z_2$, $z_1 \in H, z_2 \in H^{\bot}$.

It follows that $G(H) \cap H^{\bot} \ne \emptyset$ iff $0 \in
\Pi_H(G(H))$. From (\ref{est_l}) we know that for $r$ sufficiently large,

$$ (\Pi_H \circ G(z), z) = (G(z), z) \le -c^*(\lab z \rab)e^{\lab z \rab} < 0 $$

\noindent for $ z \in S_{H}(r) = \lcub z \in H : \lab z \rab = r
\rcub$.

By the Brouwer Fixed Point Theorem, there exists  $z_0 \in B_H(r)
= \lcub z \in H: \lab z \rab < r \rcub$ such that $\Pi_H \circ
G(z_0) = 0$. Thus we have

$$ 0 \in \Pi_H(G(H)) $$

\noindent which means $G(H) \cap H^{\bot} \ne \emptyset$.
\end{proof}

\bigskip

Now we turn to the second part of this section. We want to discuss
an \emph{arbitrary} subspace $K$ of $R^m$ and show that (\ref{intersec}) is
true. The natural idea is to transfer the intersection problem for
$K$ to that for $P_{m,l}$. Therefore the possible configuration of
$K$ in $R^m = D_l \oplus P_{m,l}$ is first discussed, then via
Lemma $4.5$ the intersection problem (\ref{intersec}) is transformed to (\ref{key_lem}),
which is an intersection problem with respect to $P_{m,l}$. This
is achieved at the expense of adding an \emph{arbitrary} linear
map $F$ from some subspace of $P_{m,l}$ to $D_l$, which leads to
the change of relation function $G$ to that of $G^*$.

But the essence of this section is actually Lemma $4.5$, so a few
words about the idea of proof are helpful. The idea is to construct a domain $\Omega_H(r) \subset
H$ which resembles the ball $B_H(r)$ with a slight modification.
We want to choose $\Omega_H(r)$ so that it is compact and convex
and thus homeomorphic to the ball. Moreover, we want to show that
on the boundary of $\Omega_H(r)$ we have

$$ (\Pi_H \circ G^*(z), \hat n(z)) = (G^*(z), \hat n(z)) < 0 $$

\noindent where $z \in \pa \Omega_H(r)$, $\hat n(z)$ means the outward normal of $\pa \Omega_H(r)$ at $z$. 

Then the Brouwer Fixed Point Theorem can be applied to obtain
Lemma $4.5$.

\bigskip

\begin{lemma} For any subspace $ H \subset P_{m,l}$, and for any linear map $F: H \te D_l$, we have

\begin{equation}\label{key_lem}
  G \circ (I+F)(H) \cap H^{\bot} \ne \emptyset
\end{equation}

\noindent or, equivalently, there exists $x \in H$, such that $ G(x+F(x)) \in H^{\bot} $.

\end{lemma}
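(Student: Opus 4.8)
The plan is to recast the conclusion as a zero-finding problem for a vector field on $H$ and then produce the zero by a Brouwer/degree argument on a carefully reshaped convex body. Write $F(x)=(\beta_1(x)\underline{1}_{m_1},\dots,\beta_l(x)\underline{1}_{m_l})\in D_l$ with each $\beta_i\colon H\te R$ linear, and decompose $x=(x^{(1)},\dots,x^{(l)})$ with $x^{(i)}\in P_{m_i}$ (this holds since $x\in P_{m,l}$). Setting $G^*(x):=G(x+F(x))$, the assertion $G(x+F(x))\in H^{\bot}$ is exactly $\Pi_H\,G^*(x)=0$. Because each block $\mathbf{R}_i$ has zero row sums, $e^{x^{(i)}+\beta_i\underline{1}}\mathbf{R}_i\underline{1}^t=0$, so $G^*(x)\perp D_l$; hence $G^*$ maps $H$ into $P_{m,l}$ and $\Pi_H\circ G^*\colon H\te H$ is the natural field. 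Testing against $x$ and using the block-diagonal form of $\mathbf{R}$ yields the identity $(G^*(x),x)=\sum_{i=1}^l e^{\beta_i(x)}\,(e^{x^{(i)}}\mathbf{R}_i,x^{(i)})$, in which each summand is governed by Lemma 3.3 applied to the single-linkage-class block $\mathbf{R}_i$, namely $(e^{x^{(i)}}\mathbf{R}_i,x^{(i)})\le -c_i(\abs{x^{(i)}})e^{\abs{x^{(i)}}}$ and $(e^{x^{(i)}}\mathbf{R}_i,x^{(i)})\le t_i$ (Corollary 3.6).

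I would first record \emph{why} the proof of Lemma 4.4 cannot simply be repeated, since this dictates the construction. Along any ray $x=t\xi$ each block either has $x^{(i)}\equiv 0$ (contributing $0$) or $\abs{x^{(i)}}\te+\infty$ (contributing $-\infty$ via the block estimate), so $(G^*(t\xi),t\xi)\te-\infty$ along every fixed direction; however the radius at which this becomes negative is \emph{not} uniform in $\xi$. A block whose component $x^{(i)}$ is small keeps its factor $(e^{x^{(i)}}\mathbf{R}_i,x^{(i)})$ in the bounded, possibly positive regime ($\le t_i$), while the linear amplification $e^{\beta_i(x)}$ introduced by $F$ may be large; consequently on any fixed large Euclidean sphere there can be directions where $(G^*(x),x)>0$. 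Thus the plain ball fails, and the degenerate coupling through the $\beta_i$ must be defeated geometrically. (Equivalently, one could try the homotopy $sF$, $s\in[0,1]$, starting from Lemma 4.4 at $s=0$ and invoking degree invariance; but the required uniform \emph{a priori} bound on zeros is precisely the same boundary estimate, so the reshaping cannot be avoided.)

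The remedy is to construct a compact convex set $\Omega_H(r)\subset H$ containing the origin, adapted to the block decomposition $R^m=\bigoplus_i R^{m_i}$, whose boundary is a slight modification of $S_H(r)$ chosen so that the outward normal $\hat n(z)$ at each point is weighted toward precisely those blocks whose component $z^{(i)}$ is large (hence deep in the decay regime). On $\pa\Omega_H(r)$ one then wants $(\Pi_H\circ G^*(z),\hat n(z))=(G^*(z),\hat n(z))<0$: the blocks with large $\abs{z^{(i)}}$ supply, through Lemma 3.3, a negative term large enough to dominate the at-most-$t_i$ positive contributions of the remaining blocks, now that the domain has been extended far enough in the offending directions to force those blocks past their positive regime. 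Since an inward-pointing continuous vector field on a convex body has a zero (Brouwer, exactly as in Lemmas 4.4 and the argument sketched before the statement), one obtains $z_0\in\Omega_H(r)$ with $\Pi_H\,G^*(z_0)=0$, i.e. $G(z_0+F(z_0))\in H^{\bot}$.

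The hard part is the construction of $\Omega_H(r)$ together with the verification of the boundary inequality. The difficulty is structural: $H$ is an \emph{arbitrary} subspace of $P_{m,l}$, not aligned with the block splitting, so one cannot rescale the blocks independently while remaining in $H$ — a block-weighted ellipsoid would have outward normal $\bigl(\mu_i z^{(i)}\bigr)_i$, whose projection back onto $H$ generates an uncontrolled cross term $(G^*(z),u)$ with $u\in H^{\bot}\cap P_{m,l}$. The modified domain must therefore mix blocks in its normals, and the estimate $(G^*(z),\hat n(z))<0$ has to be carried out keeping track of the linear functionals $\beta_i$ using only the per-block decay of Lemma 3.3 (and the uniform positivity $\min_{\abs{w}_2\le\tilde C}\tau(\tilde{\mathbf{R}}(w))>0$ from Corollary 3.5). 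Establishing convexity, compactness, and this normal inequality simultaneously is where the real work of Lemma 4.5 lies.
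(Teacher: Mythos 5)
Your setup is sound and matches the paper's framing: you reduce the claim to finding a zero of $\Pi_H\circ G^*$ on $H$, observe that $G^*(H)\subset P_{m,l}$ because each block of $\mathbf{R}$ has zero row sums, derive the blockwise identity $(G^*(x),x)=\sum_{i=1}^l e^{\beta_i(x)}(e^{x^{(i)}}\mathbf{R}_i,x^{(i)})$, and correctly diagnose why the plain ball of Lemma 4.4 fails (the radius at which negativity sets in is not uniform in direction once the $e^{\beta_i}$ weights enter). But the proposal stops exactly where the proof begins: you never construct $\Omega_H(r)$, never identify its boundary pieces and outward normals, and never carry out the boundary estimate --- indeed you state explicitly that this ``is where the real work of Lemma 4.5 lies'' and leave it undone. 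That construction and estimate \emph{are} the lemma; what you have written is an accurate strategy statement, not a proof.

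Moreover, the one quantitative mechanism you do propose for the boundary inequality would fail. You suggest that on $\pa\Omega_H(r)$ the blocks with large $\lab z^{(i)}\rab$ supply negative terms ``large enough to dominate the at-most-$t_i$ positive contributions of the remaining blocks.'' But each block is weighted by $e^{\beta_i(z)}$ with $\beta_i$ an uncontrolled linear functional on $H$: a positive contribution of size at most $t_i$ can be amplified by a factor exponentially large in $r$, while the negative terms may carry weights exponentially small in $r$, so no cross-block domination can be salvaged. The paper's construction is engineered precisely to avoid any such comparison: it decomposes $H=H_1\oplus H_2\oplus H_3$ with $H_1=H\cap\Pi_1^{-1}(0)$, $H_2=H\cap\Pi_2^{-1}(0)$, $H_3=H\cap(H_1+H_2)^{\bot}$ (so $\Pi_i$ restricted to $H_3$ are isomorphisms, giving $C_2\lab y_2\rab_2\le\lab y_1\rab_2\le C_1\lab y_2\rab_2$), takes $\Omega_H(r)$ to be the ball $\lab z\rab_2\le r$ truncated by the constraints $\lab z_i\rab_2\le r-\frac{C^{(2)}}{r}$, and on each of the four resulting boundary pieces (normals along $z$, $z_1$, $z_2$, $z_3$) shows that \emph{every} blockwise term in $(G^*(z),\hat n(z))$ is \emph{individually} strictly negative: on $S_1$ via Remark 4.2 after a two-case analysis on $\lab z_3\rab_2$, and on $S_2,S_3,S_4$ via Corollary 3.5, where the complementary component (e.g.\ $y_2$ on $S_2$, bounded by $\sqrt{2C^{(2)}}$ thanks to the truncation) plays the role of the bounded perturbation $w$. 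Once each term is negative, the positive factors $e^{F_i(z)}$ are harmless no matter how wild $F$ is --- this is the key point your sketch misses, and it is also why the argument later extends verbatim to nonlinear $F$ (Remark 4.9). A minor further slip: along a ray $x=t\xi$ the quantity $(G^*(t\xi),t\xi)$ need not tend to $-\infty$; if $\beta_i(\xi)$ is sufficiently negative, a block's weighted contribution can tend to $0^-$. Eventual negativity along each ray still holds, which is all your heuristic actually needs.
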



\begin{proof} For the sake of simplicity, we give the proof for $l=2$. The proof for $l \ge 3$ case is similar.

When $l = 2$, we define

\begin{equation}
H_1 =  H \cap \Pi_1^{-1}(0), \hspace{.1in} H_2 =  H \cap \Pi_2^{-1}(0), \hspace{.1in} H_3 =  H \cap (H_1+H_2)^{\bot}
\end{equation}

\noindent where $\Pi_i: P_{m, l} \te P_{m_i}, i = 1, 2$ is as given in Definition 4.1. Then $H = H_1 \oplus H_2 \oplus H_3$ and $H_i \bot H_j$ for $i \ne
j \in \lcub 1,2,3 \rcub$.


Thus given $ z \in H$, we have

\begin{equation}
 z = z_1 + z_2 + z_3 = ( 0,
 x_1) + ( x_2,  0) + (
y_1, y_2)
\end{equation}

\noindent where $ z_1 = (0,x_1) \in H_1$, $z_2 = (x_2,0) \in H_2$,
$z_3 = (y_1, y_2) \in H_3 $. Then there exists $ C_1 > C_2 > 0 $ such that  $$ C_2 \lab y_2 \rab_2 \le \lab y_1 \rab_2 \le C_1 \lab y_2 \rab_2$$ for all $ z_3 = (y_1, y_2) \in H_3$, since by construction $\Pi_i$ restricted to $H_3$ are isomorphisms for $i = 1, 2$.

\bigskip

\noindent Now define $\Omega_H(r) = \lcub z \in H : \lab z \rab_2
\le r, \lab z_1 \rab_2 \le r - \frac{C^{(2)}}{r}, \lab z_2 \rab_2
\le r - \frac{C^{(2)}}{r}, \lab z_3 \rab_2 \le r -
\frac{C^{(2)}}{r} \rcub $, where $ C^{(2)} = 3\max \lcub 1, R_0^2
\cdot \frac{1+ C_2^2}{C_2^2}, R_0^2 \cdot (1+C_1^2) \rcub $ is a
constant.

\bigskip

First, we have the following

\begin{claim} If $r > C^{(2)}$, then we have that
$\Omega_H(r)$ is homeomorphic to $ \bar B_H(1) = \lcub z \in H : \lab z
\rab_2 \le 1 \rcub$; thus $\partial \Omega_H(r)$ is homeomorphic
to $S_H(1) = \partial B_H(1) = \lcub z \in H : \lab z \rab = 1
\rcub$. \end{claim}

\noindent Proof of Claim 4.6: $\Omega_H(r)$ is homeomorphic to
$\bar B_H(1)$ due to the fact that for $r > C^{(2)}$, $\Omega_H(r)
\subset H$ is a compact and convex set, with nonempty interior, thus is
homeomorphic to $\bar B_H(1)$. The second claim follows from the fact
that the restriction of a homeomorphism is also a homeomorphism. Q.E.D.

\begin{notation} We define the \emph{structure function} $G^*: H \te P_{m,l}$ as

\begin{equation}
 G^*(x) = G(x+F(x)).
\end{equation}

\noindent For the linear map $F: H \te D_l$, we will also write $F$ as
$$ F(z) = (F_1(z)\underline 1_{m_1}, F_2(z)\underline 1_{m_2},
\ldots, F_l(z)\underline 1_{m_l}),$$  where $F_i(z): H \te R$ is a
linear function for $i = 1,2,\ldots, l$.
\end{notation}

Next we show that

\begin{claim} For $r > 0$ sufficiently large, we have $
(\Pi_H \circ G^*(z), \hat n(z)) < 0$, for all $z \in \partial
\Omega_H(r)$, where $\hat n(z)$ means the outer normal of
$\Omega_H(r)$ at $z \in \partial \Omega_H(r)$.\end{claim}

\noindent Proof of Claim 4.7: When $r > 0$ is sufficiently large, we
have $\partial \Omega_H(r) = S_1 \cup S_2 \cup S_3 \cup S_4$,
where $ \left \{
      \begin{array}{ll}
      S_1 = \lcub z \in H: \lab z \rab_2 = r, \lab z_1 \rab_2 \le r - \frac{C^{(2)}}{r}, \lab z_2 \rab_2 \le r - \frac{C^{(2)}}{r}, \lab z_3 \rab_2 \le r - \frac{C^{(2)}}{r} \rcub, \\
      S_2 = \lcub z \in H: \lab z_1 \rab_2 = r - \frac{C^{(2)}}{r}, \lab z \rab_2 \le r, \lab z_2 \rab_2 \le r - \frac{C^{(2)}}{r}, \lab z_3 \rab_2 \le r - \frac{C^{(2)}}{r} \rcub, \\
      S_3 = \lcub z \in H: \lab z_2 \rab_2 = r - \frac{C^{(2)}}{r}, \lab z_1 \rab_2 \le r - \frac{C^{(2)}}{r}, \lab z \rab_2 \le r, \lab z_3 \rab_2 \le r - \frac{C^{(2)}}{r} \rcub, \\
      S_4 = \lcub z \in H: \lab z_3 \rab_2 = r - \frac{C^{(2)}}{r}, \lab z_1 \rab_2 \le r - \frac{C^{(2)}}{r}, \lab z_2 \rab_2 \le r - \frac{C^{(2)}}{r}, \lab z \rab_2 \le r \rcub
      \end{array} \right. $

\noindent with outer normal $\hat n(z) = \frac{z}{\lab z \rab_2},
\frac{z_1}{\lab z_1 \rab_2}, \frac{z_2}{\lab z_2 \rab_2},
\frac{z_3}{\lab z_3 \rab_2}$, respectively.

Thus for $ z \in S_1$, we have

\begin{equation}\label{norm_est11}
\begin{array}{ll}
(\Pi_H \circ G^*(z), \hat n(z)) =  (\Pi_H \circ G^*(z), \frac{z}{\lab z \rab_2}) = (G^*(z), \frac{z}{\lab z \rab_2}) \\
= \frac{e^{z+F(z)}{\mathbf R}z}{\lab z \rab_2} =
\frac{e^{x_2+y_1+F_1(z)\underline 1_{m_1}}{\mathbf R_1}(x_2+y_1)^t
+ e^{x_1+y_2+F_2(z)\underline 1_{m_2}}{\mathbf R_2}(x_1+y_2)^t
}{\lab z \rab_2},
\end{array} 
\end{equation}

Now from $z \in S_1$ we have $\lab z \rab_2 = r, \lab z_1 \rab_2
\le r - \frac{C^{(2)}}{r}, \lab z_2 \rab_2 \le r -
\frac{C^{(2)}}{r}$, thus

\begin{equation}\label{norm_est}
\left \{
\begin{array}{ll}
{\lab z_2 \rab_2}^2 + {\lab z_3 \rab_2}^2 = & {\lab z \rab_2}^2 - {\lab z_1 \rab_2}^2 \ge r^2 - (r - \frac{C^{(2)}}{r})^2 = 2C^{(2)} - \frac{{C^{(2)}}^2}{r^2} > C^{(2)} \\
{\lab z_1 \rab_2}^2 + {\lab z_3 \rab_2}^2 = & {\lab z \rab_2}^2 -
{\lab z_2 \rab_2}^2 \ge r^2 - (r - \frac{C^{(2)}}{r})^2 = 2C^{(2)}
- \frac{{C^{(2)}}^2}{r^2} > C^{(2)}
\end{array} \right.
\end{equation}

\noindent for $r > C^{(2)}$.

\begin{itemize}

\item If $\lab z_3 \rab_2^2 \le \frac{C^{(2)}}{2}$, then from (\ref{norm_est})
we have

\begin{equation}
\left \{
\begin{array}{ll}
{\lab x_1 + y_2 \rab_2}^2 \ge {\lab x_1 \rab_2}^2 = \lab z_1 \rab^2 > \frac{C^{(2)}}{2} > {R_0}^2 \\
{\lab x_2 + y_1 \rab_2}^2 \ge {\lab x_2 \rab_2}^2 = \lab z_2
\rab^2 > \frac{C^{(2)}}{2} > {R_0}^2
\end{array} \right.
\end{equation}

\noindent since $H_1 \bot H_3, H_2 \bot H_3$. Thus by Remark 4.2 and (\ref{norm_est11}) we have

\begin{equation}\label{norm_est1_res}
\left \{
\begin{array}{ll}
(\Pi_H \circ G^*(z),z) = \frac{e^{x_2+y_1+F_1(z) \underline 1_{m_1}}{\mathbf R_1}(x_2+y_1)^t + e^{x_1+y_2+F_2(z) \underline 1_{m_2}}{\mathbf R_2}(x_1+y_2)^t }{\lab z \rab_2} \\
= \frac{e^{F_1(z)} \cdot e^{x_2+y_1}{\mathbf R_1}(x_2+y_1)^t +
e^{F_2(z)} \cdot e^{x_1+y_2}{\mathbf R_2}(x_1+y_2)^t }{\lab z
\rab_2} < 0
\end{array} \right.
\end{equation}

\item If ${\lab z_3 \rab_2}^2 > \frac{C^{(2)}}{2}$, then we have

\begin{equation}
\left \{
\begin{array}{ll}
{\lab y_1 \rab_2}^2 \ge \frac{C_2^2}{1+ C_2^2} {\lab z_3 \rab_2}^2 > \frac{C_2^2}{1+ C_2^2} \cdot \frac{C^{(2)}}{2} > R_0^2 \\
{\lab y_2 \rab_2}^2 \ge \frac{1}{1+ C_1^2} {\lab z_3 \rab_2}^2 >
\frac{1}{1+ C_1^2} \cdot \frac{C^{(2)}}{2} > R_0^2
\end{array} \right.
\end{equation}

\noindent Thus, similarly, we also have that (\ref{norm_est1_res}) holds.

\end{itemize}

Summarizing the estimate above, we have shown that if $r > C^{(2)}$, then

\begin{equation}\label{norm_est1}
 (\Pi_H \circ G^*(z), \hat n(z) ) < 0
\end{equation}

\noindent for all $z \in S_1$.

For $ z \in S_2$, we have

\begin{equation}
\left \{
\begin{array}{ll}
(\Pi_H \circ G^*(z), \hat n(z)) =  (\Pi_H \circ G^*(z), \frac{z_1}{\lab z_1 \rab_2}) = (G^*(z), \frac{z_1}{\lab z_1 \rab_2}) \\
= \frac{e^{z+F(z)}{\mathbf R}z_1^t}{\lab z_1 \rab_2} = \frac{
e^{x_1+y_2+F_2(z)\underline 1_{m_2}}{\mathbf R_2}(x_1)^t }{\lab
z_1 \rab_2}
\end{array} \right.
\end{equation}

Now, from $z \in S_2$ we have $\lab z_1 \rab_2 = r -
\frac{C^{(2)}}{r}, \lab z \rab_2 \le r$, thus we have

\begin{equation}
\left \{
\begin{array}{ll}
{\lab y_2 \rab_2}^2 \le \lab z_3 \rab_2^2 \le \lab z \rab_2^2 - \lab z_1 \rab_2^2 \le r^2 - (r - \frac{C^{(2)}}{r})^2 \\
 = 2C^{(2)} - \frac{{C^{(2)}}^2}{r^2} < 2C^{(2)}
\end{array} \right.
\end{equation}

\noindent which means $ \lab y_2 \rab_2 < \sqrt{2C^{(2)}}$.

With Corollary 3.5 of Section 3 in mind, we choose $r_1 > 0$ satisfying
$r_1 - \frac{C^{(2)}}{r_1} > r(\sqrt{2C^{(2)}},{\mathbf R_2}))$,
such that when $r > r_1$, we have $\lab x_1 \rab_2 = \lab z_1
\rab_2 = r - \frac{C^{(2)}}{r} > r_1 - \frac{C^{(2)}}{r_1}$, thus
combining the two inequalities above we have

\begin{equation}
\frac{e^{x_1+y_2+F_2(z) \underline 1_{m_2}}{\mathbf R_2}(x_1)^t }{\lab z_1 \rab_2}
= \frac{e^{F_2(z)} \cdot e^{x_1+y_2}{\mathbf R_2}(x_1)^t }{\lab
z_1 \rab_2} < 0
\end{equation}

Thus we have shown that if $r > r_1$, then

\begin{equation}\label{norm_est2}
 (\Pi_H \circ G^*(z), \hat n(z) ) < 0
\end{equation}

\noindent for all $z \in S_2$.

Similarly for $z \in S_3$, we can choose $r_2 > 0$ satisfying $r_2
- \frac{C^{(2)}}{r_2} > r(\sqrt{2C^{(2)}},{\mathbf R_1}))$, such
that when $ r > r_2$, we have

\begin{equation}\label{norm_est3}
(\Pi_H \circ G^*(z), \hat n(z) ) = \frac{e^{x_2+y_1+F_1(z) \underline 1_{m_1}}{\mathbf R_1}(x_2)^t }{\lab z_2 \rab_2}
= \frac{e^{F_1(z)} \cdot e^{x_2+y_1}{\mathbf R_1}(x_2)^t }{\lab
z_2 \rab_2} < 0
\end{equation}

\noindent for all $z \in S_3$.

Now for $ z \in S_4$, we have

\begin{equation}
\begin{array}{ll}
(\Pi_H \circ G^*(z), \hat n(z)) = (\Pi_H \circ G^*(z), \frac{z_3}{\lab z_3 \rab_2}) = (G^*(z), \frac{z}{\lab z \rab_2}) \\
= \frac{e^{z+F(z)}{\mathbf R}z_3^t}{\lab z_3 \rab_2} =
\frac{e^{x_2+y_1+F_1(z)\underline 1_{m_1}}{\mathbf R_1}(y_1)^t +
e^{x_1+y_2+F_2(z)\underline 1_{m_2}}{\mathbf R_2}(y_2)^t }{\lab
z_3 \rab_2}
\end{array} 
\end{equation}

$z \in S_4$ implies $\lab z_3 \rab_2 = r -
\frac{C^{(2)}}{r}, \lab z \rab_2 \le r$, thus

\begin{equation}
\left \{
\begin{array}{ll}
\lab x_1 \rab_2^2 = \lab z_1 \rab_2^2 \le \lab z \rab_2^2 - {\lab z_3 \rab_2}^2 \le r^2 - (r - \frac{C^{(2)}}{r})^2 = 2C^{(2)} - \frac{{C^{(2)}}^2}{r^2} < 2C^{(2)}, \\
\lab x_2 \rab_2^2 = \lab z_2 \rab_2^2 \le \lab z \rab_2^2 - {\lab
z_3 \rab_2}^2 \le r^2 - (r - \frac{C^{(2)}}{r})^2 = 2C^{(2)} -
\frac{{C^{(2)}}^2}{r^2} < 2C^{(2)}
\end{array} \right.
\end{equation}

\noindent which means that $ \lab x_1 \rab_2 < \sqrt{2C^{(2)}}$, $
\lab x_2 \rab_2 < \sqrt{2C^{(2)}}$.

By Corollary $3.5$ of Section 3, we can choose $r_3 > 0$
satisfying $r_3 - \frac{C^{(2)}}{r_3} > \max (
\sqrt{\frac{1+C_2^2}{C_2^2}}, \sqrt{1+C_1^2} ) \cdot \max (
r(\sqrt{2C^{(2)}},{\mathbf R_1}), r(\sqrt{2C^{(2)}}, {\mathbf
R_2}) )$, such that when $r > r_3$, we have $\lab z_3 \rab_2 = r -
\frac{C^{(2)}}{r} > r_3 - \frac{C^{(2)}}{r_3}$, which leads to

\begin{equation}
\left \{
\begin{array}{ll}
\lab y_1 \rab_2 \ge \sqrt{\frac{C_2^2}{1+C_2^2}} \lab z_3 \rab_2 > r(\sqrt{2C^{(2)}},{\mathbf R_1}) \\
\lab y_2 \rab_2 \ge \sqrt{\frac{1}{1+C_1^2}} \lab z_3 \rab_2 >
r(\sqrt{2C^{(2)}},{\mathbf R_2})
\end{array} \right.
\end{equation}

Thus, we have that when $r > r_3$,

\begin{equation}\label{norm_est4}\left \{ \begin{array}{ll}
(\Pi_H \circ G^*(z), \hat n(z) ) = \frac{e^{x_2+y_1+F_1(z)\underline 1_{m_1}}{\mathbf R_1}(y_1)^t + e^{x_1+y_2+F_2(z)\underline 1_{m_2}}{\mathbf R_2}(y_2)^t }{\lab z_3 \rab_2} \\
= \frac{e^{F_1(z)} \cdot e^{x_2+y_1}{\mathbf R_1}(y_1)^t +
e^{F_2(z)} \cdot e^{x_1+y_2}{\mathbf R_2}(y_2)^t }{\lab z_3
\rab_2} < 0 \end{array}\right.
\end{equation}

\noindent for all $z \in S_4$.

So combining (\ref{norm_est1}), (\ref{norm_est2}), (\ref{norm_est3}) and (\ref{norm_est4})  we have
finished the proof of Claim 4.7. Q.E.D.

\bigskip

Combining Claim 4.6 and Claim 4.7, also noticing the discussion before
Lemma $4.5$,  we finish the proof of Lemma $4.5$.

\end{proof}

\begin{remark} It is easy to see that $\Pi_H \circ G^* $ is an analytic function defined on $H$, so there can only be finitely many $z \in \Omega_H(r) \subset H$ such that $\Pi_H \circ G^*(z) = 0$. Also from the proof of Lemma $4.5$ we can see that outside $\Omega_H(r)$ there is no $z \in H$ belonging to $G^*(H) \cap H^{\bot}$.
\end{remark}

\bigskip

\begin{remark} It is an interesting observation that in the proof of Lemma $4.5$, we \emph{never} use the fact that $F: H \te D_l$ is linear. In fact, the reader may readily check that if we change the word ``linear'' to ``nonlinear'' in Lemma $4.5$, the proof of Lemma $4.5$ still goes through. This observation will play a fundamental role in the proof of Theorem 1.1. We write is as the following
\end{remark}

\begin{corollary}For all $ H \subset P_{m,l}$ and for all \emph{continuous} map $F: H \te D_l$, we have

\begin{equation}
 G^*(H)\cap H^{\bot} = G \circ (I+F)(H) \cap
H^{\bot} \ne \emptyset
\end{equation}

or, equivalently, $ G(x+F(x)) \in H^{\bot} $ for some $x \in H$.
\end{corollary}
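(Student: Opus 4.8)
The plan is to observe, as anticipated in Remark 4.9, that the argument already given for Lemma 4.5 actually proves this stronger statement: nowhere in that argument was the \emph{linearity} of $F$ invoked, only its continuity together with the fact that each component $F_i(z)$ is a real-valued scalar. I would therefore reproduce the construction of Lemma 4.5 verbatim and verify that each step survives when $F: H \te D_l$ is merely continuous. As in Lemma 4.5, I would carry out the details for $l = 2$, the case $l \ge 3$ being identical.

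First I would note that the orthogonal decomposition $H = H_1 \oplus H_2 \oplus H_3$, the constants $C_1, C_2$ governing the two-sided comparison $C_2 \lab y_2 \rab_2 \le \lab y_1 \rab_2 \le C_1 \lab y_2 \rab_2$ on $H_3$, the constant $C^{(2)}$, and the domain $\Omega_H(r)$ all depend only on the geometry of $H \subset P_{m,l}$ and not at all on $F$. Hence Claim 4.6 is untouched: for $r > C^{(2)}$ the set $\Omega_H(r)$ remains compact, convex, with nonempty interior, and so is homeomorphic to $\bar B_H(1)$, and $\pa \Omega_H(r)$ to $S_H(1)$.

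Next I would track the single place where $F$ enters, namely the boundary estimate of Claim 4.7. Writing $F(z) = (F_1(z)\underline 1_{m_1}, \ldots, F_l(z)\underline 1_{m_l})$, the map $F$ appears in each of the face estimates on $S_1, S_2, S_3, S_4$ only through the strictly positive scalar factors $e^{F_i(z)}$ multiplying the relation-matrix quadratic forms $e^{(\cdot)}{\mathbf R_i}(\cdot)^t$. Since $e^{F_i(z)} > 0$ for every real value $F_i(z)$, these factors never alter the sign of the inner product; the decisive negativity is supplied entirely by Remark 4.2 and Corollary 3.5, both of which concern the factors $e^{(\cdot)}{\mathbf R_i}(\cdot)^t$ and are independent of $F$. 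Consequently the four inequalities establishing $(\Pi_H \circ G^*(z), \hat n(z)) < 0$ on $S_1 \cup S_2 \cup S_3 \cup S_4 = \pa \Omega_H(r)$ hold word for word for continuous $F$.

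Finally, continuity of $F$ guarantees that $G^* = G \circ (I+F)$, and hence $\Pi_H \circ G^*$, is a continuous vector field on $\Omega_H(r)$. A continuous field pointing strictly inward along the boundary of a set homeomorphic to a ball must vanish at some interior point, exactly as in Lemma 4.5, by the Brouwer Fixed Point Theorem; this yields $z_0 \in \Omega_H(r)$ with $\Pi_H \circ G^*(z_0) = 0$, i.e. $G(z_0 + F(z_0)) \in H^\bot$, which is the assertion. I expect there to be no real obstacle here: the only point demanding genuine care is that it is \emph{continuity} of $F$ — not merely the positivity of the exponential factors — that licenses the Brouwer step, and that every constant and the domain $\Omega_H(r)$ were fixed independently of $F$, so that the boundary estimates transfer intact.
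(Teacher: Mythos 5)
Your proposal is correct and follows essentially the same route as the paper, which disposes of this corollary precisely by the observation in Remark 4.9 that the proof of Lemma 4.5 never invokes linearity of $F$: the map enters only through the positive scalar factors $e^{F_i(z)}$, the domain $\Omega_H(r)$ and all constants are fixed independently of $F$, and continuity of $F$ suffices for the Brouwer step. Your verification of these three points is exactly the check the paper leaves to the reader.
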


\begin{lemma} For all $ H' \subset R^n$ such that $H' \cap D_l = \lcub \theta \rcub$, we have

$$ G(H') \cap {H'}^{\bot} \ne \emptyset. $$

\end{lemma}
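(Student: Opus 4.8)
The plan is to reduce the statement to the continuous version of Lemma 4.5 (Corollary 4.10) by realizing $H'$ as the graph of a linear map over its orthogonal projection into $P_{m,l}$. First I would record that $R^m = D_l \oplus P_{m,l}$ is an \emph{orthogonal} direct sum: a vector is orthogonal to the block-indicator vectors spanning $D_l$ exactly when all of its block-sums vanish, so $P_{m,l} = D_l^{\bot}$. Let $\pi : R^m \te P_{m,l}$ be the associated orthogonal projection, with $\ker \pi = D_l$. Since $H' \cap D_l = \lcub \theta \rcub$, the restriction $\pi|_{H'}$ is injective; setting $H = \pi(H') \subset P_{m,l}$ makes $\pi|_{H'} : H' \te H$ a linear isomorphism. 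For $x \in H$, writing $(\pi|_{H'})^{-1}(x) = x + d$ with $d \in D_l$ and putting $F(x) = d$ defines a \emph{linear} map $F : H \te D_l$ with $H' = (I+F)(H)$; that is, $H'$ is precisely the graph of $F$ over $H$.

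The second ingredient is the structural observation that $G$ maps all of $R^m$ into $P_{m,l}$. Each block $\mathbf{R}_i$ has rows summing to zero (property 1 of $\mathbf{R}$), so $\mathbf{R}\, d^t = \theta$ for every $d \in D_l$; hence $(G(z), d) = e^z \mathbf{R} d^t = 0$ for all $z \in R^m$ and all $d \in D_l$, i.e. $G(z) \in D_l^{\bot} = P_{m,l}$. In particular, the $D_l$-component along which $H$ and $H'$ differ is annihilated by every value of $G$.

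With these in hand I would apply Corollary 4.10 to the subspace $H \subset P_{m,l}$ and the continuous (here linear) map $F : H \te D_l$: there exists $x \in H$ with $G(x + F(x)) \in H^{\bot}$. Put $w = x + F(x) \in H'$. It remains to upgrade membership in $H^{\bot}$ to membership in $(H')^{\bot}$. For any $z \in P_{m,l}$ and any $x' \in H$, since $F(x') \in D_l$ and $z \in P_{m,l} = D_l^{\bot}$, we have $(z, x' + F(x')) = (z, x')$; letting $x'$ range over $H$ shows $P_{m,l} \cap (H')^{\bot} = P_{m,l} \cap H^{\bot}$. As $G(w) \in P_{m,l}$ by the second step and $G(w) \in H^{\bot}$ by construction, it follows that $G(w) \in (H')^{\bot}$, and since $w \in H'$ this yields $G(H') \cap (H')^{\bot} \ne \emptyset$.

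The substance lies entirely in the two reductions; once $H'$ is recognized as a graph of a map into $D_l$ and $G$ is seen to take values in $P_{m,l}$, the conclusion is immediate from Corollary 4.10. The only point demanding care is the matching of orthogonal complements $H^{\bot}$ (taken in $R^m$) with $(H')^{\bot}$, and I expect this to be the sole place where one can slip: it holds \emph{precisely} because the $D_l$-directions, along which $H$ and $H'$ differ, are invisible to every value of $G$. Beyond keeping this orthogonality bookkeeping straight, I anticipate no genuine obstacle.
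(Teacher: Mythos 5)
Your proposal is correct and is essentially the paper's own argument: the paper likewise projects $H'$ onto $H = \Pi_{P_{m,l}}(H')$, uses $H' \cap D_l = \lcub \theta \rcub$ to realize $H'$ as the graph of a linear map $F: H \te D_l$, proves (Claim 4.12) that $G(H') \cap {H'}^{\bot} = G(H') \cap H^{\bot}$ via $G(R^m) \subset P_{m,l}$, and concludes by Lemma 4.5. Your only departures are cosmetic — you explicitly verify $G(R^m) \subset P_{m,l}$ from the zero row sums of $\mathbf{R}$ (which the paper uses but states elsewhere) and you cite Corollary 4.10 where the linear Lemma 4.5 already suffices, as you yourself note.
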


\begin{proof} Let $H = \Pi_{P_{m,l}}(H')$. From
the assumption $H' \cap D_l = H' \cap P_{m,l}^{\bot} = \lcub
\theta \rcub$ we know that $\Pi_{P_{m,l}}: H' \te H$ is one-to-one
and onto, thus is an isomorphism. It then follows that there exists a linear map $F: H
\te D_l$ such that for all $z \in H'$, there exists  unique $x \in H$
such that

\begin{equation}
 z = x + F(x)
\end{equation}

\bigskip

\begin{claim} $G(H') \cap {H'}^{\bot} = G(H') \cap H^{\bot}$ \end{claim}

\noindent Proof of {\bf Claim 4.12}: For all $y \in G(H') \cap
{H'}^{\bot}$, $ z \in H'$, we have $y \in G(H') \subset P_{m,l}$,
thus

\begin{equation}
 (y,z) = (y, x+F(x)) = (y, x) = 0,
\end{equation}

\noindent which shows that $y \in H^{\bot}$, so we obtain

\begin{equation}\label{inclu_1}
 G(H')\cap {H'}^{\bot} \subset G(H') \cap
H^{\bot}
\end{equation}

Now, for all $y' \in G(H')\cap H^{\bot}$, $ x \in H$, we have

\begin{equation}
 (y',x) = (y', x+F(x)) = (y',z) = 0
\end{equation}

Thus

\begin{equation}\label{inclu_2}
 G(H')\cap H^{\bot} \subset G(H') \cap
{H'}^{\bot}
\end{equation}

Combining (\ref{inclu_1}) and (\ref{inclu_2}), we obtain

\begin{equation}
 G(H')\cap {H'}^{\bot} = G(H') \cap H^{\bot}.
\end{equation}

Q.E.D. \bigskip

Now utilizing Claim 4.12 we obtain

\begin{equation}
G(H') \cap {H'}^{\bot} = G(H') \cap H^{\bot} = G \circ (I + F)(H) \cap H^{\bot} \ne \emptyset
\end{equation}

\noindent where the last statement follows from Lemma 4.5.

\end{proof}

\begin{lemma} If the CR has exactly $l$ linkage classes, then for any subspace $H$ of $R^m$, we have that

\begin{equation}
G(H) \cap H^{\bot} \ne \emptyset
\end{equation}

\end{lemma}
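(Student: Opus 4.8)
The plan is to reduce the statement for an arbitrary subspace $H \subset R^m$ to the case already settled in Lemma 4.11, by peeling off the part of $H$ that lies inside the degenerate subspace $D_l$. The one structural fact that makes this possible is that the relation function always lands in $P_{m,l}$. Since $\mathbf{R}$ has the block-diagonal form (\ref{block}) and each row of each block sums to zero, for every $z \in R^m$ and every linkage-class block $b_k$ only the indices in $b_k$ contribute to the $b_k$-entries, so $\sum_{j \in b_k} (e^z \mathbf{R})_j = \sum_{i \in b_k} e^{z_i} ( \sum_{j \in b_k} r_{ij} ) = 0$; hence $G(R^m) \subset P_{m,l}$. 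This is exactly the inclusion $G(H') \subset P_{m,l}$ that was already invoked in the proof of Lemma 4.11.

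With this in hand, I would set $V = H \cap D_l$ and let $W$ be the orthogonal complement of $V$ inside $H$, so that $H = V \oplus W$. The first thing to check is that $W \cap D_l = \lcub \theta \rcub$: any $w \in W \cap D_l$ lies in $H \cap D_l = V$ while being orthogonal to $V$, and is therefore $\theta$. Thus $W$ is a subspace meeting $D_l$ trivially, so Lemma 4.11 applies to $W$ and furnishes a point $y \in G(W) \cap W^{\bot}$.

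It then remains to upgrade $y \in W^{\bot}$ to $y \in H^{\bot}$, and here the key observation does the work for free: $y \in G(W) \subset G(R^m) \subset P_{m,l}$, while $V \subset D_l = P_{m,l}^{\bot}$, so $y$ is automatically orthogonal to $V$. Combined with $y \in W^{\bot}$ this gives $(y, v + w) = 0$ for all $v \in V$ and $w \in W$, i.e. $y \in H^{\bot}$. Finally $W \subset H$ gives $G(W) \subset G(H)$, whence $y \in G(H) \cap H^{\bot}$, which is the desired conclusion.

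The analytic content of the argument is genuinely concentrated in Lemma 4.11 (and ultimately in the \emph{a priori} estimate of Lemma 4.3); the only new ingredient is the bookkeeping of the $D_l$-direction, and I expect the sole place demanding care to be verifying that the chosen complement $W$ avoids $D_l$, so that Lemma 4.11 is legitimately applicable. One could equally phrase the reduction through Corollary 4.10, applying it to $H^* = \Pi_{P_{m,l}}(H)$ together with the linear map $F: H^* \te D_l$ whose graph is $W$; but routing through Lemma 4.11 is the most economical, since that lemma has already packaged precisely this graph construction.
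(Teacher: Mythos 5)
Your proof is correct and takes essentially the same route as the paper's: both split $H$ as $(H \cap D_l) \oplus W$ with $W \cap D_l = \lcub \theta \rcub$, use the inclusion $G(R^m) \subset P_{m,l} = D_l^{\bot}$ to get orthogonality to the $D_l$-part for free, and then invoke Lemma 4.11 on $W$. Your only deviations are cosmetic: you take $W$ to be the orthogonal complement of $H \cap D_l$ inside $H$ (the paper allows any complement meeting $D_l$ trivially), and you supply the block-row-sum verification of $G(R^m) \subset P_{m,l}$, which the paper invokes without proof.
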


\begin{proof} If the CR has $l$ linkage classes, by
relabeling $Y_1, Y_2, \ldots, Y_m$ we may assume that $\mathbf R$
has the same diagonal block form as that of (\ref{block}).

Now let $H \subset R^n$ be fixed. By linear algebra, we know
that $ H = H^{(1)} \oplus H^{(2)}$, where $H^{(1)} = H \cap D_l
\subset D_l, H^{(2)} \cap D_l = \lcub \theta \rcub$. Thus we have

\begin{equation}
\left \{
\begin{array}{ll}
G(H) \cap H^{\bot} = G(H^{(1)} \oplus H^{(2)}) \cap (H^{(1)} \oplus H^{(2)})^{\bot} = G(H^{(1)} \oplus H^{(2)}) \cap {H^{(1)}}^{\bot} \cap {H^{(2)}}^{\bot} \\
=  G(H^{(1)} \oplus H^{(2)}) \cap {H^{(2)}}^{\bot}
\end{array} \right.
\end{equation}

\noindent since combining $H^{(1)} \subset D_l = P_{m,l}^{\bot}$ and
$G(R^m) \subset P_{m,l}$ we have $G(H^{(1)}\oplus H^{(2)}) \subset
P_{m,l} \subset {H^{(1)}}^{\bot}$. Finally noting that

$$  G( H^{(2)}) \cap {H^{(2)}}^{\bot} \ne \emptyset $$ due to Lemma 4.9, we have

$$ G(H^{(1)} \oplus H^{(2)}) \cap {H^{(2)}}^{\bot} \supseteq G( H^{(2)}) \cap {H^{(2)}}^{\bot} \ne \emptyset. $$

\end{proof}

\section{Proof of Main Result and the index formula}

Now we only need to observe that the existence problem for each
compatibility class can be reduced to one of intersection problem type
as in the previous section.

\bigskip

\noindent Proof of {\bf Theorem 1.1}: First a few words about the notation
and definitions.

Let $C, \mathbf{R}, K, N$ be defined as in Section 2. Suppose that the CR has
$l$ linkage classes, by relabeling $Y_1, Y_2, \ldots, Y_m$ we may
assume that the relation matrix $\mathbf R$ has the same diagonal
block form as that of (\ref{block}). Let $S$ be the stoichiometric subspace
in $R^n$, and define $ W(x_0) = \ln((x_0+S) \cap R_+^n) \cdot C^t
\subset K$, where $x_0 \in R_+^n$. \bigskip

As in the proof of Lemma $4.13$, we have
the decomposition of $K$ as $K =
 K_1 \oplus K_2'$, where $K_1 = K \cap D_l \subset D_l, K_2' \cap D_l = \lcub \theta \rcub$. Now let $ K_2 = \Pi_{P_{m,l}}(K_2')$. Then similar to the proof of Lemma $4.11$ we have

\begin{equation}\label{nontriv_1}
 G(W(x_0)) \cap K^{\bot}  = G(W(x_0)) \cap {K_2'}^{\bot} =
G(W(x_0)) \cap {K_2}^{\bot}
\end{equation}

Next we will need the following lemma, for the proof, the reader is referred to Proposition {\bf B.1} in (\cite{F3}).

\begin{lemma} Given $ x_0 \in R_+^n$, we have $\Psi: (x_0+S) \cap R_+^n \te K_2 $ given by $\Psi(x) = \Pi_{K_2}(\ln{x} \cdot C^t)$ is one-to-one and onto, actually, a diffeomorphism.

\end{lemma}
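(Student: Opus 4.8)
The plan is to establish that the map $\Psi: (x_0+S) \cap R_+^n \to K_2$, $\Psi(x) = \Pi_{K_2}(\ln x \cdot C^t)$, is a diffeomorphism by exhibiting it as a composition of well-understood maps and then verifying it is a bijection with everywhere-invertible derivative. First I would unwind the definitions: the stoichiometric subspace $S$ is spanned by the reaction vectors $Y_j - Y_i$, and since each such vector is a row combination $z \cdot C$ with $z \in K^\bot{}^\bot = K$ (more precisely, the reaction vectors live in the image of $C^t$ acting on $K$), the translate $x_0 + S$ maps under $x \mapsto \ln x \cdot C^t$ into the affine structure controlled by $K$. The target $K_2 = \Pi_{P_{m,l}}(K_2')$ is, by the decomposition $K = K_1 \oplus K_2'$ with $K_1 = K \cap D_l$, an isomorphic copy of the ``non-degenerate part'' $K_2'$ of $K$, and $\dim K_2 = \dim K_2' = \dim K - \dim(K \cap D_l)$.

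The key steps, in order, are: (1) identify $\dim S = \dim(\text{image of } C^t|_K) = \dim K_2$, so that the domain manifold $(x_0+S)\cap R_+^n$ and the target vector space $K_2$ have equal dimension, which is the necessary bookkeeping for a diffeomorphism; (2) show $\Psi$ is injective by a strict-monotonicity / convexity argument — the logarithm map $x \mapsto \ln x$ followed by projection onto a subspace transverse to $D_l$ is the gradient of a strictly convex function, so two distinct points of the same stoichiometric class cannot have the same image; (3) show $\Psi$ is surjective, typically via a coercivity estimate guaranteeing that the strictly convex potential whose gradient realizes $\Psi$ attains its minimum in the interior for each prescribed affine data; and (4) compute the Jacobian of $\Psi$ and show it is nonsingular everywhere, which upgrades the continuous bijection to a smooth diffeomorphism via the inverse function theorem. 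Since the excerpt explicitly defers the argument to Proposition B.1 of \cite{F3}, the cleanest route is to reduce $\Psi$ to the standard Birch-type / deficiency-zero parametrization already established there and merely check that the hypotheses ($x_0 + S$ a positive stoichiometric class, $K_2$ the appropriate complement) match.

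The main obstacle I anticipate is step (3), surjectivity, together with the accompanying need to control behavior near the boundary of $R_+^n$. Injectivity and the Jacobian computation are essentially linear-algebraic consequences of strict convexity of the relevant potential and the transversality $K_2' \cap D_l = \{\theta\}$, which makes $\Pi_{P_{m,l}}$ restricted to $K_2'$ an isomorphism onto $K_2$. Surjectivity, however, requires showing that as $x$ approaches $\partial R_+^n$ within the stoichiometric class, the image $\Psi(x)$ escapes to infinity in $K_2$ — this is precisely where the positivity and the transversality to the degenerate direction $D_l$ (along which $\ln x \cdot C^t$ contributes nothing after projection) must be invoked to rule out spurious accumulation. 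I would therefore lean on the cited Proposition B.1, which handles exactly this coercivity and boundary analysis for the map $x \mapsto \Pi(\ln x \cdot C^t)$, and present the present lemma as a direct specialization of it to the splitting $K = K_1 \oplus K_2'$ arising from our $l$-linkage-class setting.
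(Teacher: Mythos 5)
Your proposal ends exactly where the paper does: the paper offers no proof of this lemma at all, deferring entirely to Proposition B.1 of \cite{F3}, and your reduction to that proposition (after checking that $x_0+S$ is a positive stoichiometric class and that $\dim K_2 = \dim S$ via the transversality $K_2' \cap D_l = \{\theta\}$) is the same move. Your additional sketch of what lies inside the citation — strict convexity for injectivity, boundary coercivity for surjectivity, inverse function theorem for smoothness — is a faithful outline of the standard Birch-type argument Feinberg's proposition rests on, so there is nothing to correct.
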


Due to Lemma $5.1$ and the fact that $W(x_0) \subset K$,
$\Pi_{P_{m,l}}K =K_2$, there exists a continuous ( probably \emph{nonlinear})
map $F: K_2 \te D_l $ such that for all $x \in W(x_0)$, there
exists unique $ z \in K_2 $ such that $ x = z + F(z). $ Defining a function $\hat G: K_2 \te
P_{m,l}$ as

\begin{equation}
 \hat G(z) = G(z+F(z))
\end{equation}

\noindent for all $z \in K_2$. Note that $\hat G(K_2) = G(W(x_0))$.

Due to Remark $4.9$ (or Corollary $4.10$) of Section 4 we know that

\begin{equation}\label{nontriv_2}
\hat G(K_2) \cap {K_2}^{\bot} = G \circ (I + F)(K_2) \cap {K_2}^{\bot} \ne \emptyset,
\end{equation}

\noindent i.e., we have $  G(W(x_0)) \cap {K_2}^{\bot} \ne \emptyset $.

Combining (\ref{nontriv_1}) with (\ref{nontriv_2}) we have that $ G(W(x_0)) \cap K^{\bot} \ne
\emptyset $. Now we only need to observe that the map $F$ we
defined above is an analytic function, thus by Remark $4.8$ of Section
4, we must have $car(G(W(x_0)) \cap K^{\bot})$ is finite, where $car(Q)$ means the
cardinality of the set $Q$. The proof of Theorem 1.1 is complete.

\begin{corollary}Let $s$ be the dimension of the stiochiometric subspace. If we restrict the vector field $f$ in (\ref{dyn}) to some fixed positive stiochiometric compatibility class, denote the steady states in the stoichiometric compatibility class as $x_1, x_2, \ldots, x_t$ (the number $t$ of steady states depends on the stoichiometric compatibility class we choose), then we have

$$ \sum_{i=1}^t ind(x_i) = (-1)^s $$

\noindent where $ind(x_i)$ is the index of vector field $f$ at
steady state $x_i, i = 1, 2, \ldots, t$.
\end{corollary}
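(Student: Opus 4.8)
The plan is to prove the index formula in Corollary 5.2 by a homotopy (degree-theoretic) argument that connects the given weakly reversible mass-action vector field, restricted to a fixed positive stoichiometric compatibility class, to a simpler reference vector field whose single equilibrium has a known, easily computed index. The essential point established in Theorem 1.1 is that the equilibria form a finite set and all lie in the interior of a compact, convex region $\Omega$ (coming from the sets $\Omega_H(r)$ of Lemma 4.5, transported back to the compatibility class via the diffeomorphism $\Psi$ of Lemma 5.1). On the boundary of this region the relevant vector field points strictly inward, by the estimate $(\Pi_H\circ G^*(z),\hat n(z))<0$ proved in Claim 4.7. This inward-pointing condition is exactly what is needed to conclude that the total Brouwer degree (equivalently, the sum of indices) of the vector field over $\Omega$ is well defined and equals the degree of any homotopic field satisfying the same boundary condition.

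First I would set up the computation intrinsically on the $s$-dimensional compatibility class $(x_0+S)\cap R^n_+$, where $s=\dim S$. Using Lemma 5.1, the map $\Psi(x)=\Pi_{K_2}(\ln x\cdot C^t)$ is a diffeomorphism from the compatibility class onto $K_2$, so the index of $f$ at each steady state $x_i$ equals (up to the orientation sign introduced by $\Psi$) the index of the pulled-back field $\Pi_{K_2}\circ\hat G$ at the corresponding point $z_i\in K_2$, where $\hat G(z)=G(z+F(z))$. Summing the indices of $f$ over all $x_i$ thus reduces to computing the total degree of the analytic map $\Pi_{K_2}\circ\hat G:\Omega_{K_2}(r)\to K_2$ relative to its boundary, which by Claim 4.7 is nonzero precisely because the field points inward everywhere on $\partial\Omega_{K_2}(r)$. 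One must track carefully the sign relating the index of $f$ to the index of $\Pi_{K_2}\circ\hat G$; because $\Psi$ is a diffeomorphism of an $s$-dimensional manifold, this sign is a single global factor (a power of $-1$ from orientation reversal, or simply $+1$ if orientations are matched), the same for every equilibrium, hence it factors out of the sum.

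Next I would compute the total degree via a homotopy. The natural choice is the linear homotopy $H_t(z)=(1-t)\,\Pi_{K_2}\circ\hat G(z)+t\,(-z)$ from the mass-action field to the inward radial field $-z$ on $K_2$; alternatively one shrinks $\mathbf R$ toward $-I$ or scales the rate constants. The crucial requirement is that the inward-pointing boundary condition on $\partial\Omega_{K_2}(r)$ be preserved for all $t\in[0,1]$, so that $H_t$ has no zeros on the boundary and the degree is homotopy-invariant. Since both $\Pi_{K_2}\circ\hat G$ and $-z$ satisfy $(\,\cdot\,,\hat n(z))<0$ on the boundary (the latter trivially, the former by Claim 4.7), any convex combination does too, and the homotopy is admissible. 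The reference field $-z$ on the $s$-dimensional space $K_2$ has a single nondegenerate zero at the origin with index $(-1)^s$, so the total degree of $\Pi_{K_2}\circ\hat G$ is $(-1)^s$, giving $\sum_{i=1}^t ind(x_i)=(-1)^s$ after confirming the orientation factor is $+1$.

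The hard part will be the orientation bookkeeping: establishing that the index of the original vector field $f$ at $x_i$ genuinely agrees (with the correct, uniform sign) with the index of the pulled-back field $\Pi_{K_2}\circ\hat G$ at $z_i$, and that the sign produced by the diffeomorphism $\Psi$ together with the projections is exactly $+1$ rather than an extraneous $(-1)^s$. This requires relating the Jacobian of $f$ restricted to $S$ at $x_i$ to the Jacobian of $\Pi_{K_2}\circ\hat G$ at $z_i$ through the chain rule for $\Psi$, the logarithmic change of variables $x\mapsto\ln x\cdot C^t$, and the splitting $R^m=D_l\oplus P_{m,l}$; one should check that the positive diagonal factors (the $e^{z_i}$ and the Jacobian of $\Psi$, which is positive definite by the diffeomorphism property) contribute no sign. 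Once this identification is pinned down, the homotopy invariance of degree and the elementary index $(-1)^s$ of $-z$ deliver the formula immediately.
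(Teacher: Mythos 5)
Your overall route coincides with the paper's: push the problem onto $K_2$ via the diffeomorphism $\Psi$ of Lemma 5.1, invoke the inward-pointing estimate of Claim 4.7 on $\partial\Omega_H(r)$ (together with Remark 4.8, which puts all finitely many zeros strictly inside $\Omega$), and read off the total index by degree theory. Your explicit linear homotopy $H_t(z)=(1-t)\,\Pi_{K_2}\circ\hat G(z)+t(-z)$ is sound and is essentially a proof of the ``standard result of degree theory'' the paper cites: it is admissible because both fields have negative inner product with every outward normal on $\partial\Omega$ (for $-z$ this follows from convexity of $\Omega$ with $0$ in its interior, and at the nonsmooth points of $\partial\Omega$ the normal cone is spanned by the normals of the pieces $S_1,\dots,S_4$, against each of which Claim 4.7 gives strict negativity). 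So the degree computation itself is fine.

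The genuine gap sits exactly where you flag ``the hard part'' and then wave at it: identifying the indices of $f$ with those of $\Pi_{K_2}\circ\hat G$. The pushforward $\Psi_*f\circ\Psi^{-1}$ is \emph{not} $\Pi_{K_2}\circ\hat G$; the chain rule gives $\Pi_{K_2}\bigl(e^{z+F(z)}{\mathbf R}\,C\cdot Diag(1/x)\cdot C^t\bigr)$, and the projection $\Pi_{K_2}$ does not commute past the matrix $C\cdot Diag(1/x)\cdot C^t$, so there is no evident factorization of the pushforward as (your field) times a positive-definite factor — which is precisely what your claim ``positive factors contribute no sign'' needs. The paper supplies the missing device: the reduced configuration matrix $\tilde C$ of Definition 5.3, whose properties ${\mathbf R}\cdot C={\mathbf R}\cdot\tilde C$ and $\Pi_{K_2}(x\cdot C^t)=\Pi_{K_2}(x\cdot\tilde C^t)$ let one write $\Psi_*f\circ\Psi^{-1}(z)=\Pi_{K_2}(e^{z+F(z)}{\mathbf R})\cdot\tilde C\cdot Diag(1/x)\cdot\tilde C^t$, with the right factor positive definite on $K_2$; a homotopy of that factor to the identity then transfers the total degree to $\Pi_{K_2}\circ\hat G$ without disturbing zeros or boundary behavior. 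Your fallback of comparing Jacobian signs pointwise would in any case fail at degenerate zeros (there the index is a local degree, not a determinant sign), so the multiplicative homotopy is needed, not optional. Finally, your orientation worries are misplaced in both directions: the Jacobian of a diffeomorphism is not ``positive definite by the diffeomorphism property,'' but no orientation factor can arise anyway, since the index of a vector field is invariant under pushforward by \emph{any} diffeomorphism — the derivative of $\Psi$ enters once through the change of point and once through the change of vector, and its sign cancels.
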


\begin{proof} Let $C, {\mathbf R}, K, K_2,
S$ be defined as above.

First, we give the following

\begin{definition} \rm For the configuration matrix $ C = \left( \begin{array}{c}
Y_1 \\ Y_2 \\ \vdots \\ Y_m \end{array} \right)_{m \times n}$, we
define the reduced configuration matrix $\tilde C$ as

$$ \tilde C = C -  \left( \begin{array}{cccc}
                  {\mathbf D}_1    \\
                   & {\mathbf D}_2  \\
                   & &  \ddots &     \\
                   & & & {\mathbf D}_l
                 \end{array} \right) \cdot C $$

\noindent where $D_i = \frac{1}{m_i} \cdot \left( \begin{array}{cccc} 1 & 1 & \ldots & 1 \\
   1 & 1 & \ldots & 1 \\
   \vdots & \vdots & \vdots & \vdots \\
   1 & 1 & \ldots & 1
   \end{array} \right)_{m_i \times m_i}$, with $m_i$ given at the beginning of section $4$ for $i = 1, 2, \ldots, l$.

\end{definition}

It is easy to check that $\tilde C$ has the following property

\begin{equation}\label{prop_tilde_c}
\left \{
\begin{array}{ll}
{\mathbf R} \cdot C = {\mathbf R} \cdot {\tilde C} \\
\Pi_{K_2}(x \cdot C^t) = \Pi_{K_2}(x \cdot {\tilde C}^t)
\end{array} \right.
\end{equation}

Now fix $x_0 \in R_+^n$, from the proof of Theorem 1.1 above we know
the vector field $f$ on $x_0 + S$ can be written as

\begin{equation}
 \frac{dx}{dt} = f(x) = e^{\ln{x} \cdot C^t}
\cdot {\mathbf R} \cdot C = e^{z+F(z)} \cdot {\mathbf R} \cdot
{\tilde C}
\end{equation}

\noindent where $z = \Psi(x) = \Pi_{K_2}(\ln{x} \cdot C^t) \in
K_2$. By Lemma $5.1$ we know that $\Psi: x_0+S \te K_2$ is a
diffeomorphism. Thus considering the induced vector field $\Psi_*
\circ f$ on $K_2$ we have

\begin{equation}
\left \{
\begin{array}{lll}
\frac{dz}{dt} & = \Psi_* \circ f \circ \Psi^{-1}(z) = \Pi_{K_2}(\frac{dx}{dt} \cdot Diag(\frac{1}{x}) \cdot C^t) \\
 & =  \Pi_{K_2}(e^{z+F(z)} \cdot {\mathbf R} \cdot {\tilde C} \cdot Diag(\frac{1}{x}) \cdot {\tilde C}^t) \\
 & =  \Pi_{K_2}(e^{z+F(z)} \cdot {\mathbf R}) \cdot {\tilde C}  \cdot Diag(\frac{1}{x}) \cdot {\tilde C}^t
\end{array} \right.
\end{equation}

\noindent where $Diag(\frac{1}{x}) = \left( \begin{array}{cccc}
                  \frac{1}{x_1}    \\
                   & \frac{1}{x_2}  \\
                   & &  \ddots &     \\
                   & & & \frac{1}{x_n}
                 \end{array} \right)$ and we have used property (\ref{prop_tilde_c}) of the reduced configuration matrix $\tilde C$.

Now notice that the matrix ${\tilde C} \cdot \left(
\begin{array}{cccc}
                  \frac{1}{x_1}    \\
                   & \frac{1}{x_2}  \\
                   & &  \ddots &     \\
                   & & & \frac{1}{x_n}
                 \end{array} \right) \cdot {\tilde C}^t$ is positive definite when the action is restricted to $K_2$, thus by a simple homotopy between ${\tilde C} \cdot \left( \begin{array}{cccc}
                  \frac{1}{x_1}    \\
                   & \frac{1}{x_2}  \\
                   & &  \ddots &     \\
                   & & & \frac{1}{x_n}
                 \end{array} \right) \cdot {\tilde C}^t$ and the identity matrix we see that the vector field $\Psi_* \circ f \circ \Psi^{-1}$ is homotopic to $\Pi_{K_2}(e^{z+F(z)} \cdot {\mathbf R})$, while the critical points and corresponding indices remain invariant under this homotopy.

We observe that in the proof of Lemma $4.5$ we construct a
bounded, convex domain $\Omega$ so that the vector field
$\Pi_{K_2}(e^{z+F(z)} \cdot {\mathbf R})$ is pointing inwards at
each point of $\partial \Omega$, thus by a standard result of
degree theory we have that

\begin{equation}
 \sum_{i=1}^t ind_f(x_i) = \sum_{i=1}^t
ind_{\Psi_* \circ f \circ \Psi^{-1}}(x_i(z)) = \sum_{i=1}^t
ind_{\Pi_{K_2}(e^{z+F(z)} \cdot {\mathbf R})}(\tilde x_i(z)) =
(-1)^s
\end{equation}

\noindent where $s$ is the dimension of $K_2$, which by Lemma 5.1
is equal to the dimension of $S$, the stoichiometric subspace.

\end{proof}

\bigskip

\end{document}